\documentclass[smallextended]{svjour3}
\usepackage{mathptmx}
\usepackage{graphicx}
\usepackage{amsmath,amsfonts}
\usepackage{amssymb}
\DeclareMathOperator{\poly}{poly}

\smartqed
\usepackage{tikz}
\usetikzlibrary{quotes, backgrounds, positioning, calc, arrows.meta, decorations.pathreplacing, shapes}
\tikzset{every picture/.style={draw=black, line width=.5pt},font=\sffamily}
\tikzset{every node/.style={circle, inner sep=0pt, minimum size=14},font=\sffamily}
\tikzset{>={Latex[color=,length=6pt,width=2.5pt 3]},font=\sffamily}
\tikzstyle{graph} = [fill=black!3, fill opacity=1, draw=black,font=\sffamily]
\tikzstyle{source} = [fill=yellow!30, fill opacity=1, draw=black,font=\sffamily]
\tikzstyle{joint} = [fill=black!3, fill opacity=1, draw=black,font=\sffamily]
\tikzstyle{tree} = [fill=red!30, fill opacity=1, draw=black,font=\sffamily]
\tikzstyle{stain} = [fill=yellow!80!orange, draw=black, line width=1pt, fill opacity=.15, dotted]
\usepackage{algorithm}
\usepackage[noend]{algpseudocode}
\usepackage{booktabs}
\usepackage{microtype}
\usepackage{nicefrac}
\usepackage{thmtools}
\usepackage{thm-restate}

\begin{document}
\newcommand{\ch}{\lfloor\frac{k}{3}\rfloor}
\newcommand{\tH}{\lfloor\frac{k}{4}\rfloor}
\newcommand{\mysp}{8pt}
\newcommand{\sds}{\ensuremath{\mathcal{C}}}
\newcommand{\ds}{\ensuremath{\mathcal{P}}}
\newcommand{\ord}[1]{\operatorname{c}(\ensuremath{#1})}
\newcommand{\roots}{S}
\newcommand{\mkey}{J}
\newcommand{\clos}{\ensuremath{P}}
\newcommand{\cnt}[1]{\ensuremath{\operatorname{ind}(#1)}}
\newcommand{\sub}[1]{\ensuremath{\operatorname{sub}(#1)}}
\newcommand{\ind}[1]{\ensuremath{\operatorname{ind}(#1)}}
\newcommand{\indp}[2]{\ensuremath{\operatorname{ind}_{#2}(#1)}}
\newcommand{\cntp}[1]{\text{sub}\ensuremath{^{+}(#1)}}
\newcommand{\inj}[1]{\text{inj}\ensuremath{(#1)}}
\newcommand{\injA}[2]{\text{inj}\ensuremath{_{#1}(#2)}}
\newcommand{\aut}[1]{\text{aut}\ensuremath{(#1)}}
\newcommand{\homo}[1]{\ensuremath{\operatorname{hom}(#1)}}
\newcommand{\degr}[1]{\operatorname{deg}(#1)}
\newcommand{\sk}{\ensuremath{\Lambda}}
\newcommand{\core}{\ensuremath{C_i}}
\newcommand{\encg}{\ensuremath{\Gamma}}
\newcommand{\skel}[1]{\ensuremath{\Lambda}(#1)}
\newcommand{\skeli}[2]{\ensuremath{\Lambda_{#1}}(#2)}
\newcommand{\CountIndPlus}{CountInd$^{*}$}
\newcommand{\CountIndPlusBis}{CountInd}
\newcommand{\JointCount}{JointCount}
\newcommand{\injCond}[2]{\text{inj}\ensuremath{(#1\,|\,#2)}}
\newcommand{\homoCond}[2]{\text{hom}\ensuremath{(#1\,|\,#2)}}
\newcommand{\countDagInj}{\text{CountDagInjections}}
\newcommand{\enumerateChunk}{\text{EnumPiece}}
\newcommand{\enumDecomp}{\text{EnumDecomposition}}
\newcommand{\treeCount}{\text{HomCount}}
\newcommand{\subTreeCount}[2]{\ensuremath{\operatorname{c}_{#1}(#2)}}
\newcommand{\tw}{\operatorname{t}}
\newcommand{\sw}{\tau}
\newcommand{\hsw}{\tau}
\newcommand{\pluseq}{\mathrel{+}=}
\newcommand{\simpleCount}{\text{SimpleCount}}
\newcommand{\cov}[1]{C}
\newcommand{\buildThick}{\ensuremath{\textsc{Core}}}
\newcommand{\decompose}{\ensuremath{\textsc{Decompose}}}
\newcommand{\buildTree}{\ensuremath{\textsc{BuildTree}}}
\newcommand{\cleanSkel}{\ensuremath{\textsc{RRS}}}
\newcommand{\pth}{\!\leftrightsquigarrow\!}
\newcommand{\polylog}{\operatorname{polylog}}
\newcommand{\oline}[1]{\ensuremath{\Gamma[#1]}}
\newcommand{\bu}{\mathbf{u}}
\newcommand{\baru}{\bar{u}}
\newcommand{\barbu}{\bar{\mathbf{u}}}
\newcommand{\bv}{\mathbf{v}}
\newcommand{\barv}{\bar{v}}
\newcommand{\barbv}{\bar{\mathbf{v}}}
\newcommand{\vt}{\mathcal{B}}
\newcommand{\et}{\mathcal{E}}
\newcommand{\bul}{\bullet}
\newcommand{\skic}{\sk_i^{\bul}}
\newcommand{\mkeyc}{\mkey^{\bul}}
\newcommand{\rootsc}{\roots^{\,\bul}}
\newcommand{\edgc}{E^{\bul}}
\newcommand{\Tc}{T^{\,\bul}}
\newcommand{\vtc}{\vt^{\,\bul}}
\newcommand{\etc}{\et^{\,\bul}}
\newcommand{\Pic}{P_i^{\,\bul}}

\title{Faster algorithms for counting subgraphs in sparse graphs\thanks{A preliminary version of this article appeared in the proceedings of the 14th International Symposium on Parameterized and Exact Computation (IPEC 2019).}}
\author{Marco Bressan\thanks{Marco Bressan is supported in part by a Google Focused Award ``Algorithms and Learning for AI'' (ALL4AI), by the ERC Starting Grant DMAP 680153, and by the ``Dipartimenti di Eccellenza 2018-2022'' grant awarded to the Department of Computer Science of the Sapienza University of Rome.}}
\institute{Dipartimento di Informatica, Sapienza Universit\`a di Roma, Italy. \email{bressan@di.uniroma1.it}}
\maketitle

\begin{abstract}
Given a $k$-node pattern graph $H$ and an $n$-node host graph $G$, the subgraph counting problem asks to compute the number of copies of $H$ in $G$.
In this work we address the following question: can we count the copies of $H$ faster if $G$ is sparse?
We answer in the affirmative by introducing a novel tree-like decomposition for directed acyclic graphs, inspired by the classic tree decomposition for undirected graphs.
This decomposition gives a dynamic program for counting the homomorphisms of $H$ in $G$ by exploiting the degeneracy of $G$, which allows us to beat the state-of-the-art subgraph counting algorithms when $G$ is sparse enough.
For example, we can count the induced copies of any $k$-node pattern $H$ in time $2^{O(k^2)} O(n^{0.25k + 2} \log n)$ if $G$ has bounded degeneracy, and in time $2^{O(k^2)} O(n^{0.625k + 1} \log n)$ if $G$ has bounded average degree.
These bounds are instantiations of a more general result, parameterized by the degeneracy of $G$ and the structure of $H$, which generalizes classic bounds on counting cliques and complete bipartite graphs.
We also give lower bounds based on the Exponential Time Hypothesis, showing that our results are actually a characterization of the complexity of subgraph counting in bounded-degeneracy graphs.

\keywords{subgraph counting, tree decomposition, degeneracy, sparsity}
\CRclass{Mathematics of computing -- Discrete mathematics \and Theory of computation -- Design and analysis of algorithms \and Theory of computation -- Graph algorithms analysis}
\end{abstract}

\section{Introduction}
\label{sec:intro}
We address the following fundamental subgraph counting problem:
\begin{quote}
\textbf{Input:} an $n$-node graph $G$ (the \emph{host graph}) and a $k$-node graph $H$ (the \emph{pattern})\\
\textbf{Output:} the number of induced copies of $H$ in $G$
\end{quote}
If no further assumptions are made, the best possible algorithm for this problem is likely to have running time $f(k) \cdot n^{\Theta(k)}$.
Indeed, the naive brute-force algorithm has running time $O(k^2 n^k)$, and under the Exponential Time Hypothesis~\cite{Impagliazzo&1998} any algorithm for counting $k$-cliques has running time $n^{\Omega(k)}$~\cite{Chen&2005,Chen&2006}.
The best algorithm known, which was given over $30$ years ago by Ne\v{s}et\v{r}il and Poljak~\cite{Nesetril&1985} and is based on fast matrix multiplication, is only slightly faster than $O(k^2n^k)$.
Ignoring $\poly(k)$ factors\footnote{In this paper we suppress $\poly(k)$ factors by default; if needed, we explicit them in order to emphasize that the dependence on $k$ is polynomial rather than exponential.}, the algorithm runs in time $O(n^{\omega\lfloor\frac{k}{3}\rfloor + 2})$ where $\omega$ is the matrix multiplication exponent.
Since $\omega \le 2.373$~\cite{LeGall2014}, this gives a state-of-the-art running time of $O( n^{0.791 k + 2})$.

In this work, we aim at breaking through this ``$n^{\Theta(k)}$ barrier'' by assuming that $G$ is sparse, and in particular, that $G$ has bounded degeneracy.
This assumption is often made for real-world graphs like social networks, since it agrees well with their structural properties~\cite{Eppstein&2011}.
The family of bounded-degeneracy graphs is rich from a theoretical point of view, too: it includes many important classes such as Barab\'asi-Albert preferential attachment graphs, graphs excluding a fixed minor, planar graphs, bounded-treewidth graphs, bounded-degree graphs, and bounded-genus graphs, see~\cite{Grohe&2013nowheredense}.
Unfortunately, even when $G$ has bounded degeneracy, the state of the art remains the $O(n^{0.791 k + 2})$-time algorithm by Ne\v{s}et\v{r}il and Poljak, unless one makes further assumptions.
For example, one can count the copies of any given pattern $H$ in time $O(n)$, provided $G$ is planar~\cite{Eppstein1995} or has bounded treewidth~\cite{Nesetril2012sparsity} or has bounded degree~\cite{Patel&2018}; all conditions that are stricter than bounded degeneracy.
Alternatively, if $G$ has bounded degeneracy, $O(n)$-time algorithms exist when $H$ is the clique~\cite{Alon&2008,Chiba&1985,Eppstein&2010}, or when $H$ is a complete bipartite graph, if we do not require the copies of $H$ to be induced~\cite{Eppstein1994}.
Unfortunately, it is not clear how to extend the techniques behind these results to all patterns $H$ and all $G$ with bounded degeneracy.
Thus, to what extent a small degeneracy of $G$ makes subgraph counting easier remains an open question.

In this work we introduce a novel tree-like graph decomposition, to be applied to the pattern graph $H$, designed to exploit the degeneracy of $G$ when counting the homomorphisms of $H$ in $G$.
When $G$ is sparse enough, this decomposition yields subgraph counting algorithms faster than the state of the art.
For example, we show how to count the induced copies of \emph{any} $k$-node pattern $H$ in time $2^{O(k^2)} O(n^{0.25 k + 2} \log n)$ when $G$ has bounded degeneracy, and in time $2^{O(k^2)} O(n^{0.625 k + 2} \log n)$ when $G$ has bounded average degree.
These results are instantiations of a more general result which says that $H$ can be counted in time $f(k) O(d^{k-\hsw(H)} n^{\hsw(H)} \log n)$, where $d$ is the degeneracy of $G$, and $\hsw(H)$ is a certain measure of ``width'' of $H$ arising from our decomposition.
Assuming the Exponential Time Hypothesis, we also show that $n^{\Omega(\hsw(H)/\log \hsw(H))}$ operations are required in the worst case, even if $G$ has degeneracy $2$.
This provides a novel characterization of the complexity of subgraph counting in bounded-degeneracy graphs.

\subsection{Results}
\label{sub:results}
We divide our results into \emph{bounds} (Section~\ref{sub:bounds}) and \emph{techniques} (Section~\ref{sub:tech}).
We denote by $d$ the degeneracy of $G$, and we denote by \homo{H,G}, \sub{H,G}, \ind{H,G} the number of, respectively, homomorphisms, occurrences, and induced occurrences of $H$ in $G$. 
See Subsection~\ref{sub:prelim} for further definitions and notation. 
We remark that, unless otherwise specified, our bounds hold for every $H$ including disconnected ones.

\subsubsection{Bounds}
\label{sub:bounds}
Our first results are two running time bounds parameterized by the sparsity of $G$. 
\begin{theorem}
\label{thm:mainbound}
For any $k$-node pattern $H$ one can compute \homo{H,G} and \sub{H,G}  in time $2^{O(k \log k)} \cdot O(d^{k-(\tH+2)} n^{\tH+2} \log n)$, and one can compute \ind{H,G} in time $2^{O(k^2)} \cdot O(d^{k-(\tH+2)} n^{\tH+2} \log n)$, where $d$ is the degeneracy of $G$. 
\end{theorem}
This bound reduces the exponent of $n$ to $\tH+2 \le 0.25 k + 2$, down from the state-of-the-art $\omega \lfloor \frac{k}{3} \rfloor + 2 \le 0.791 k + 2$ of the Ne\v{s}et\v{r}il-Poljak bound.
This implies that our polynomial dependence on $n$ is better whenever $d = O(n^{0.721})$, and in any case (that is, even if $\omega=2$) whenever $d = O(n^{0.556})$.
As a corollary of Theorem~\ref{thm:mainbound}, since $d = O(\sqrt{rn})$ where $r$ is the average degree of $G$, we obtain:
\begin{theorem}
\label{thm:lowavgd}
For any $k$-node pattern $H$ one can compute \homo{H,G} and \sub{H,G} in time $2^{O(k \log k)} \cdot O(r^{\frac{1}{2}(k-\tH)-1} n^{\frac{1}{2}(k+\tH)+1} \log n)$, and one can compute \ind{H,G} in time $2^{O(k^2)} \cdot O(r^{\frac{1}{2}(k-\tH)-1} n^{\frac{1}{2}(k+\tH)+1} \log n)$, where $r$ is the average degree of $G$. 
\end{theorem}
This bound has a polynomial dependence on $n$ better than Ne\v{s}et\v{r}il-Poljak whenever $r = O(n^{0.221})$, and in any case (that is, even if $\omega=2$) whenever $r=O(n^{0.056})$.
In particular, we have a $2^{O(k^2)} \cdot O(n^{0.625k+1} \log n)$-time algorithm when $r=O(1)$.
These are the first improvements over the Ne\v{s}et\v{r}il-Poljak algorithm for graphs with small degeneracy or small average degree.

As a second result, we give improved bounds for some classes of patterns.
The first is the class of quasi-cliques, a typical target pattern for social networks~\cite{Sariyuce&2018,Sariyuce&2017,Tsourakakis&2017}.
We prove:
\begin{theorem}
\label{thm:dense_ub}
If $H$ is the clique minus $\epsilon$ edges, then one can compute \homo{H,G} and \sub{H,G} in time $2^{O(k \log k)} \cdot O(d^{k-\lceil \frac{1}{2} + \sqrt{\frac{\epsilon}{2}} \, \rceil} n^{\lceil \frac{1}{2} + \sqrt{\frac{\epsilon}{2}} \, \rceil} \log n)$, and \ind{H,G} in time $2^{O(\epsilon + k \log k)} \cdot O(d^{k-\lceil \frac{1}{2} + \sqrt{\frac{\epsilon}{2}} \, \rceil} n^{\lceil \frac{1}{2} + \sqrt{\frac{\epsilon}{2}} \, \rceil} \log n)$.
\end{theorem}
\noindent 
This generalizes the classic  $O(d^{k-1} n)$ bound for counting cliques by Chiba and Nishizeki~\cite{Chiba&1985}, at the price of an extra factor $2^{O(\epsilon + k \log k)} O(\log n)$.
Next, we consider complete quasi-multipartite graphs:
\begin{theorem}
\label{thm:bipartite}
If $H$ is a complete multipartite graph, then one can compute $\homo{H,G}$ and $\sub{H,G}$ in time $2^{O(k \log k)} \cdot O(d^{k-1} n \log n)$.
If $H$ is a complete multipartite graph plus $\epsilon$ edges, then one can compute $\homo{H,G}$ and $\sub{H,G}$ in time $2^{O(k \log k)} \cdot O(d^{k-\lfloor\frac{\epsilon}{4}\rfloor-2} n^{\lfloor\frac{\epsilon}{4}\rfloor+2} \log n)$.
\end{theorem}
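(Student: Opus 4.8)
The whole statement reduces to a bound on the DAG-treewidth. Recall that the algorithm underlying Theorem~\ref{thm:mainbound} computes $\homo{H,G}$ in time $f(d,k)\cdot\tilde O(n^{\sw(H)})$ once $G$ has bounded degeneracy, where $\sw(H)$ is the DAG-treewidth of $H$, and that $\sub{H,G}$ is obtained from it by the usual inclusion--exclusion, writing $\sub{H,G}$ as a signed combination of the $\homo{H/\pi,G}$ over partitions $\pi$ of $V(H)$ whose blocks are independent sets of $H$; thus $\sub{H,G}$ costs $f(d,k)\cdot\tilde O(n^{w})$ whenever $\sw(H/\pi)\le w$ for all such $\pi$. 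It therefore suffices to prove: $\sw(\vec H)=1$ for every acyclic orientation $\vec H$ of a complete multipartite graph; and $\sw(\vec H)\le\lfloor\frac{\epsilon}{4}\rfloor+2$ for every acyclic orientation of a complete multipartite graph plus $\epsilon$ edges---this class being closed under the relevant quotients, which yields the $\sub{H,G}$ bound.

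For the complete multipartite case, fix an acyclic orientation $\vec H$ of $H=K_{n_1,\dots,n_r}$ with parts $V_1,\dots,V_r$. I would establish two facts. (i) All sources of $\vec H$ lie in one part, say $V_1$: any two vertices in distinct parts are adjacent, so the edge between them would give one of them an in-edge. (ii) Every non-source vertex is reachable from every source: a non-source outside $V_1$ is a direct out-neighbour of every source (each source is adjacent to it and has no in-edge), and a non-source $u\in V_1$ has an in-neighbour $x$, which lies outside $V_1$ since $V_1$ is independent, so $x$---hence $u$---is reached from every source. Given (i)--(ii), take the tree to be any path on the sources with singleton bags: each source occupies exactly one bag and every non-source is reachable from the source in every bag, so the coverage and connectivity conditions of a DAG-tree decomposition hold with width $1$. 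As $\vec H$ has a source, $\sw(\vec H)=1$, so $\sw(H)=1$. For $\sub{H,G}$, every block of the partitions $\pi$ lies inside a single part (cross-part pairs are never independent), so $H/\pi$ is again complete multipartite and the bound is preserved. This is exactly where induced copies fail: the supergraphs of $H$ in the inclusion--exclusion for $\ind{H,G}$ add intra-part edges---possibly turning parts into cliques---and can have far larger DAG-treewidth.

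For the perturbed case, write $H=H_0+F$ with $H_0$ complete multipartite, $F$ a set of $\epsilon$ edges inside the parts, and let $W$ be the $\le 2\epsilon$ endpoints of $F$. Fact (i) holds verbatim, since distinct-part vertices are already adjacent in $H_0\subseteq H$, so for a fixed acyclic orientation $\vec H$ all sources lie in one part $V_1$. The key step is a localisation claim: every non-source vertex $u\notin W$ is still reachable from every source, because all edges at such a $u$ are $H_0$-edges and the argument from (ii) goes through. Hence the reachability sets of the sources can differ only on the $O(\epsilon)$-vertex set $W$, and the remainder of $\vec H$ can be absorbed into a DAG-tree decomposition at $O(1)$ extra width. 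On this bounded fragment I would then invoke the general width bound $\sw(\vec H')\le\lfloor\frac{m}{4}\rfloor+2$ for $m$-vertex patterns that underpins Theorem~\ref{thm:mainbound}, applied with $m=O(\epsilon)$, and splice it with the width-$1$ handling of the complete multipartite bulk to get $\sw(\vec H)\le\lfloor\frac{\epsilon}{4}\rfloor+2$. For $\sub{H,G}$ the quotients $H/\pi$ are again complete multipartite plus at most $\epsilon$ edges---merging vertices in distinct parts creates a self-loop and contributes $0$, merging vertices inside a part only deletes or relocates edges of $F$---so the bound carries over.

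The main obstacle is the perturbed case, and specifically pinning down the coefficient $\frac14$ and the constant $2$. A naive argument that charges one unit of width per vertex of $W$ only gives $\sw(\vec H)=O(\epsilon)$ with the wrong constant. To reach $\lfloor\frac{\epsilon}{4}\rfloor$ I expect to need (a) a clean localisation lemma confining all nontrivial reachability to a fragment whose size is controlled by $\epsilon$ rather than merely by $|W|\le 2\epsilon$, and (b) a tight use of the general $\lfloor m/4\rfloor$-type bound on that fragment, arranged so that folding the complete multipartite part into the decomposition costs no more than the advertised $+2$.
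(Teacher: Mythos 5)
Your treatment of the purely complete multipartite case is correct and matches the paper: all sources lie in one part, every other vertex is reachable from every source, singleton bags give width $1$, and intra-part identifications preserve the class (the paper handles arbitrary quotients directly rather than discarding the cross-part ones as zero terms, but that is a cosmetic difference).

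The perturbed case, however, has a genuine gap that you yourself flag but do not close, and the route you sketch would not close it. Two corrections are needed. First, the localisation should not be to the endpoint set $W$ of the extra edges: the right observation is that \emph{every} vertex outside the source part $V_1$ (whether or not it touches $F$) is a direct out-neighbour of every source, hence lies in $V_P(u)$ for all sources $u$ and is irrelevant to the path-intersection condition of a d.t.d. Consequently any d.t.d.\ of the induced sub-dag $P[V_1]$ is already a d.t.d.\ of the whole $P$ — no ``splicing'' or ``$+O(1)$ absorption'' step is required. Second, and this is where your constant goes wrong, $P[V_1]$ may have many vertices (e.g.\ all the sources), so the vertex-parameterised bound $\lfloor m/4\rfloor+2$ with $m=O(\epsilon)$ (or $m\le|W|\le 2\epsilon$) only yields $\lfloor\epsilon/2\rfloor+2$ or worse. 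What you are missing is that the general decomposition theorem (Theorem~\ref{thm:widthbound}) is stated as $\sw(T)\le\min(\lfloor e/4\rfloor,\lfloor k/4\rfloor)+2$, i.e.\ it also carries an \emph{edge}-parameterised form; since $V_1$ is independent in the multipartite base graph, $P[V_1]$ has at most $\epsilon$ arcs, and the edge form gives $\lfloor\epsilon/4\rfloor+2$ directly. Your quotient argument for $\sub{H,G}$ in the perturbed case is fine once this is in place.
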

\noindent 
This generalizes an existing $O(d^3 2^{2d} n)$ bound for counting the non-induced copies of complete (maximal) bi-partite graphs~\cite{Eppstein1994}, again at the price of an extra factor $2^{O(k \log k)} \log n$.

Table~\ref{tab:ub} summarizes our upper bounds.
We remark that our algorithms work for the colored versions of the problem (count only copies of $H$ with prescribed vertex and/or edge colors) as well as the weighted versions of the problem (compute the total node or edge weight of copies of $H$ in $G$).
This can be obtained by a straightforward adaptation of our homomorphism counting algorithms.

\renewcommand{\arraystretch}{1.2}
\begin{table}[h]
\resizebox{.99\textwidth}{!}{
\centering
\begin{tabular}{lll}
pattern $H$ & time to compute \ind{H,G} & reference \\
\toprule
all (even disconnected) & $O\big(n^{\omega\lfloor\frac{k}{3}\rfloor + 2}\big)$ & \cite{Nesetril&1985}\\
all (even disconnected) & $2^{O(k^2)} \cdot O\big(d^{k-\tH-2} n^{\tH+2} \log n\big)$ & this work\\
all (even disconnected) & $2^{O(k^2)} \cdot O\big(r^{\frac{1}{2}(k-\tH)-1} n^{\frac{1}{2}(k+\tH)+1} \log n\big)$ & this work \\
$K_k$ & $O\big(d^{k-1} n\big)$  & \cite{Chiba&1985} \\
$K_k$ - $\epsilon$ edges & $2^{O(\epsilon + k \log k)} \cdot O\big(d^{k-\lceil \frac{1}{2} + \sqrt{\frac{\epsilon}{2}} \, \rceil} n^{\lceil \frac{1}{2} + \sqrt{\frac{\epsilon}{2}} \, \rceil} \log n\big)$ & this work  \\
$K_{k_1,k_2}$ & $O(d^3 2^{2d} n)$  & \cite{Eppstein1994} \\
$K_{k_1,\ldots,k_{\ell}}$ & $2^{O(k \log k)} \cdot O(d^{k-1} n \log n)$ \;\, {\quad\quad\quad\quad (\sub{H,G} only}) & this work  \\
$K_{k_1,\ldots,k_{\ell}}$ + $\epsilon$ edges & $2^{O(k \log k)} \cdot O(d^{k-\lfloor\frac{\epsilon}{4}\rfloor-2} n^{\lfloor\frac{\epsilon}{4}\rfloor+2} \log n)$  \; (\sub{H,G} only) & this work
\\\bottomrule
\end{tabular}
}
\caption{Summary of upper bounds.}
\label{tab:ub}
\end{table}

\subsubsection{Techniques}
\label{sub:tech}
The bounds of Subsection~\ref{sub:bounds} are instantiations of a single, more general result.
This result is based on a novel notion of width, the \emph{dag treewidth} $\hsw(H)$ of $H$, which captures the relevant structure of $H$ when counting its copies in a $d$-degenerate graph.
In a simplified form, the bound is the following:
\begin{theorem}
\label{thm:hsw_ub}
For any $k$-node pattern $H$  one can compute \homo{H,G}, \sub{H,G}, and \ind{H,G} in time $f(k) \cdot O(d^{k-\hsw(H)} n^{\hsw(H)} \log n)$.
\end{theorem}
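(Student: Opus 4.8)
\emph{Reduction to homomorphism counting.} The plan is to exploit a minimum-outdegree orientation of $G$ to confine the images of the pattern's vertices to small neighbourhoods, and then run a tree-decomposition-style dynamic program whose width is exactly the dag treewidth. First we compute a degeneracy orientation $\vec G$ of $G$, i.e. an acyclic orientation with $\outdeg(x)\le d$ for every $x$, obtainable in time $\tilde O(n)f(d)$. Every homomorphism $\phi\colon H\to G$ pulls back the orientation of $\vec G$ to an acyclic orientation $\vec H_{\phi}$ of $H$, so grouping homomorphisms by $\vec H_{\phi}$ gives $\homo{H,G}=\sum_{\vec H}\homo{\vec H,\vec G}$, where $\vec H$ ranges over the at most $2^{\binom{k}{2}}=f(k)$ acyclic orientations of $H$ and $\homo{\vec H,\vec G}$ counts the \emph{orientation-preserving} homomorphisms $\vec H\to\vec G$, those sending every arc of $\vec H$ to an arc of $\vec G$. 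Counting non-induced and induced copies of $H$ in turn reduces to counting homomorphisms of $\le k$-vertex patterns via the usual inclusion-exclusion over the subgraph and partition lattices, costing an $f(k)$ factor and keeping the exponent of $n$ at $\hsw(H)$. It thus suffices to compute $\homo{\vec H,\vec G}$ in time $f(d,k)\cdot\tilde O(n^{\hsw(\vec H)})$ for every acyclic orientation $\vec H$, since $\hsw(H)=\max_{\vec H}\hsw(\vec H)$.

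\emph{A locality lemma.} Let $S$ be the set of sources of $\vec H$ and, for $s\in S$, let $R(s)$ be the set of vertices reachable from $s$ in $\vec H$. The sources form an independent, pairwise reachability-incomparable set, and every vertex of $\vec H$ lies in some $R(s)$ and is reached from $s$ by a directed path of length $<k$. Hence, if $\phi\colon\vec H\to\vec G$ is orientation-preserving and $v\in R(s)$, then $\phi(v)$ lies in the set $B^{+}_{<k}(\phi(s))$ of vertices at out-distance $<k$ from $\phi(s)$ in $\vec G$, and $|B^{+}_{<k}(x)|\le 1+d+\dots+d^{k-1}=:L(d,k)$ for every $x$. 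Consequently, for any $U\subseteq S$ the orientation-preserving homomorphisms of the \emph{piece} $\vec H_{U}:=\vec H\big[\,\bigcup_{s\in U}R(s)\,\big]$ into $\vec G$ can be listed in time $n^{|U|}\cdot f(d,k)$: guess the $|U|$ images of the sources in $U$ (free choices, since sources are reachability-incomparable), after which every remaining vertex of $\vec H_{U}$ is confined to one of at most $|U|\cdot L(d,k)$ vertices, leaving only $f(d,k)$ completions to check.

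\emph{Dynamic programming over a dag tree decomposition.} By the definition of $\hsw$ there is a dag tree decomposition $(T,\{B_{t}\}_{t})$ of $\vec H$ with $O(k)$ nodes and bags $B_{t}\subseteq S$ of size at most $\hsw(\vec H)$, such that every source lies in some bag, the bags containing any fixed source induce a subtree of $T$, and the sources reaching any fixed vertex of $\vec H$ all lie in a common bag. These conditions guarantee that the pieces $\{\vec H_{B_{t}}\}_{t}$ cover $\vec H$ and that, across every tree edge $\{t,p\}$, any pattern vertex occurring both in a piece on $t$'s side and in a piece on $p$'s side already belongs to $\vec H_{B_{t}\cap B_{p}}$ — the classical separator property of tree decompositions, transported to the vertex-partitioned pattern. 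We root $T$ and process it bottom-up, maintaining at each node $t$ with parent $p$ a table $D_{t}$ indexed by the orientation-preserving homomorphisms $\psi$ of the interface piece $\vec H_{B_{t}\cap B_{p}}$ into $\vec G$ (at most $n^{|B_{t}\cap B_{p}|}f(d,k)$ of them), where $D_{t}[\psi]$ is the number of orientation-preserving homomorphisms of $\bigcup_{t'\preceq t}\vec H_{B_{t'}}$ extending $\psi$. We obtain $D_{t}$ by enumerating the orientation-preserving homomorphisms $\phi$ of the local piece $\vec H_{B_{t}}$ via the locality lemma, multiplying the values $D_{c}[\phi|_{\vec H_{B_{t}\cap B_{c}}}]$ over the children $c$ of $t$, and adding the product to $D_{t}[\phi|_{\vec H_{B_{t}\cap B_{p}}}]$ — the standard join-and-forget recurrence for counting homomorphisms along a tree decomposition. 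Each node is handled in time $n^{|B_{t}|}\cdot f(d,k)\cdot\polylog(n)$, so summing over the $O(k)$ nodes produces $\homo{\vec H,\vec G}=D_{\mathrm{root}}[\,\cdot\,]$ in time $f(d,k)\cdot\tilde O(n^{\hsw(\vec H)})$, which is the claim.

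\emph{The main obstacle.} The crux is proving the dynamic program correct, i.e. that the overlaps among the pieces $\vec H_{B_{t}}$ are tree-structured, so that the interface piece $\vec H_{B_{t}\cap B_{p}}$ truly separates the two sides and partial counts compose with neither double-counting nor omission. This is exactly what the defining conditions of $\hsw$ provide: chasing the ``common bag'' condition together with the subtree condition on each source shows that any pattern vertex shared across a tree edge $\{t,p\}$ is reachable from a source in $B_{t}\cap B_{p}$, the ingredient missing from the usual tree-decomposition counting argument, and this is also why $\hsw$ rather than ordinary treewidth is the parameter controlling the running time. By comparison the locality lemma is elementary, but it is precisely where the degeneracy of $G$ is used: bounded out-degree confines everything reachable from a source to an $f(d,k)$-size ball, replacing what would be $n^{|R(s)|}$ possibilities by $n\cdot f(d,k)$.
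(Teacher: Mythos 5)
Your overall architecture---degeneracy orientation, summing over acyclic orientations, inclusion--exclusion, listing a piece $P(U)$ in time $f(d,k)\cdot n^{|U|}$, and a bottom-up dynamic program over a dag tree decomposition---matches the paper's. But there is a genuine gap in the dynamic program, and it stems from a misstatement of what a dag tree decomposition (and hence $\hsw(H)$) is. You posit a decomposition with bags $B_t\subseteq S$ of size at most $\hsw$ such that \emph{the sources reaching any fixed vertex all lie in a common bag}, and you index your tables by homomorphisms of the interface piece $\vec H_{B_t\cap B_p}$. Neither is available. The actual definition requires only the path-intersection property on closures: if $B$ lies on the tree path between $B_1$ and $B_2$, then $V_P(B_1)\cap V_P(B_2)\subseteq V_P(B)$. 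It does \emph{not} require a common bag for all sources reaching a vertex: for the inward-oriented star (all $k-1$ leaves are sources pointing to the centre $c$), a path of singleton bags is a valid decomposition of width $1$, whereas your extra condition forces a single bag of size $k-1$. So the minimum width of a decomposition satisfying \emph{your} conditions can be far larger than $\hsw(H)$, and the bound $f(d,k)\cdot\tilde O(n^{\hsw(H)})$ does not follow from your argument.

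Conversely, if you drop the extra condition and work with the real definition, your join-and-forget recurrence is incorrect: in the in-star example adjacent singleton bags intersect in $\emptyset$, so your interface piece is empty and your DP returns $\prod_i\homo{P(u_i),G}=|E|^{k-1}$ instead of $\sum_{y}(\indeg(y))^{k-1}$---the pieces all share the centre $c$, which your interface fails to pin down. The correct separator across a tree edge is not the closure of the bag intersection $B_t\cap B_p$ but the closure $V_P(B)$ of the \emph{entire parent bag}: property (3) yields $V_P(\oline{B_i})\cap V_P(\oline{B_j})\subseteq V_P(B)$ for siblings $B_i,B_j$ of parent $B$, so the children's counts become conditionally independent only after fixing a full homomorphism $\phi_B:P(B)\to G$. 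Accordingly, the paper's algorithm has each child return a table indexed by $\phi_i:P(B_i)\to G$, re-indexes it by the restriction to $V_P(B)\cap V_P(\oline{B_i})$, and only then takes the product over children for each $\phi_B$. Replacing your interface with this one (and keeping your locality lemma and cost accounting, which are fine and match the paper's Lemma~\ref{lem:listing2}) repairs the proof.
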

Let us briefly explain this result.
The heart of the problem is computing \homo{H,G}; once we know how to do this, we can obtain \sub{H,G} and \ind{H,G} via inclusion-exclusion arguments at the price of an extra multiplicative factor $f(k)$, like in~\cite{Borgs&2006,Curticapean&2017}.
To compute \homo{H,G}, we give $G$ an acyclic orientation with maximum outdegree $d$.
Then, we take every possible acyclic orientation $P$ of $H$, and compute \homo{P,G} where by \homo{P,G} we mean the number of homomorphisms of $P$ in $G$ that respect the orientations of the arcs.
Note that the number of such homomorphisms can be $n^{\Omega(k)}$ even if $G$ has bounded degeneracy (for example, if $P$ is an independent set), so we cannot list them explicitly.
At this point we introduce our technical tool, the \emph{dag tree decomposition} of $P$.
This is a tree $T$ that captures the relevant reachability relations between the nodes of $P$.
Given $T$, one can compute $\homo{P,G}$ via dynamic programming in time $f(k) \cdot O(d^{k-\hsw(T)} n^{\hsw(T)} \log n)$, where $\hsw(T) \in \{1,\ldots,k\}$ is the \emph{width} of $T$.
The dynamic program computes \homo{P,G} by combining carefully the homomorphism counts of certain subgraphs of $P$.
The dag-treewidth $\hsw(H)$, which is the parameter appearing in the bound of Theorem~\ref{thm:hsw_ub}, is the maximum width of the optimal dag tree decomposition of any acyclic orientation $P$ of any graph obtainable by identifying nodes of and/or adding edges to $H$ (this arises from the inclusion-exclusion arguments).
With this, our technical machinery is complete.
To obtain the bounds of the previous paragraph, we show how to compute efficiently dag tree decompositions of low width, and apply a more technical version of Theorem~\ref{thm:hsw_ub}.

We conclude by complementing Theorem~\ref{thm:hsw_ub} with a lower bound based on the Exponential Time Hypothesis.
This lower bound shows that in the worst case the dag-treewidth $\hsw(H)$ cannot be beaten, and therefore our decomposition captures, at least in part, the complexity of counting subgraphs in $d$-degenerate graphs.
\begin{theorem}
\label{thm:hsw_lb}
Under the Exponential Time Hypothesis~\cite{Impagliazzo&1998}, no algorithm can compute $\sub{H,G}$ or $\ind{H,G}$ in time $f(d,k) \cdot n^{o(\hsw(H)/\log{\hsw(H)})}$ for all $H$.
\end{theorem}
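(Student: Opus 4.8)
The plan is to prove the degeneracy‑parameterised counterpart of Marx's lower bound for the embedding (partitioned subgraph isomorphism) problem parameterised by treewidth. The input to the reduction will be the following fact, which packages the hard core of that result: assuming ETH~\cite{Impagliazzo&1998}, for every recursively enumerable family $\mathcal F$ of graphs of unbounded treewidth there is no algorithm that, given $F\in\mathcal F$ and a host $G'$ whose vertices are partitioned into $|V(F)|$ colour classes, decides in time $g(F)\cdot|V(G')|^{o(\tw(F)/\log\tw(F))}$ whether $G'$ contains a colour‑respecting copy of $F$ — and the same holds for \emph{counting} such copies, since counting is at least as hard as deciding. Given $(F,G')$ I will construct a pattern $H$ and a host $G$ so that (a) $G$ has degeneracy bounded by a function of $k=|V(H)|$, in fact $O(1)$; (b) the number of colour‑respecting $F$‑copies in $G'$ is recoverable from each of $\sub{H,G}$ and $\ind{H,G}$; and (c) $\hsw(H)=\Theta(\tw(F))$ while $|V(G)|=\poly(|V(G')|)$. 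Plugging a hypothetical $f(d,k)\cdot n^{o(\hsw(H)/\ln\hsw(H))}$ algorithm into this reduction then contradicts ETH.

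For the construction, take $\mathcal F$ to be the bounded‑degree graphs of treewidth linear in their size (e.g.\ cubic expanders). Build $H$ by subdividing every edge of $F$ once — write $U$ for the original vertices and $W$ for the new degree‑$2$ vertices — and attaching to each vertex of $U$ a small, $O(\log|V(F)|)$‑vertex, bounded‑degree ``tag'' gadget, the tags pairwise non‑isomorphic; attaching the matching tags to the host vertices of the corresponding class then forces every copy of $H$ in $G$ to respect the colour classes. Thus $H$ has bounded degree, treewidth $\Theta(\tw(F))$, and $k=\Theta(\tw(F)\log\tw(F))$. On the host side, realise each colour class as a set of vertices and encode each (possibly dense) binary relation between two adjacent classes by fresh \emph{witness vertices} of degree $2$, one per tuple of the relation, each joined to the two vertices it certifies; put no edges between non‑adjacent classes. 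The key point is that this sparsification carries \emph{no} logarithmic overhead: it turns the class vertices of $G'$ into hubs of possibly large degree, but a hub is harmless for degeneracy, since an elimination order deleting all witness vertices and tag gadgets first and the hubs last has constant width. Hence $G$ is $O(1)$‑degenerate; orienting $G$ along this order realises a fixed acyclic orientation $P^\star$ of $H$ that every copy inherits, in which the witness vertices and tag‑roots are the sources and the class vertices are sinks. Because non‑adjacent classes carry no edges, a colour‑respecting copy of $H$ is automatically an induced copy, so the bound holds for $\sub{H,\cdot}$ and $\ind{H,\cdot}$ at once; and the correspondence with colour‑respecting $F$‑copies is bijective up to a fixed factor, so counting transfers.

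The heart of the matter is (c). For the upper bound $\hsw(H)=O(\tw(F))$ I would use that $\hsw(H')=O(\tw(H'))$ for every $H'$ — a tree decomposition converts into a dag tree decomposition of the same order for \emph{any} acyclic orientation — together with $\tw(H)=\Theta(\tw(F))$. For the matching lower bound, look at $P^\star$: the set of sources that reach a given sink, namely the class vertex of some $v\in V(F)$, is exactly the set of witness vertices sitting on the $F$‑edges incident to $v$ (plus the one tag‑root of $v$), so the ``source‑reachability hypergraph'' of $P^\star$ is, up to these harmless pendant elements, the incidence hypergraph of $F$, whose treewidth is $\Theta(\tw(F))$ because $F$ has bounded degree. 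Since a width‑$w$ dag tree decomposition of $P^\star$ yields, after contracting the pendant pieces, a tree decomposition of that hypergraph of width $O(w)$, we get $\hsw(H)\ge\Omega(\tw(F))$. Combining the two directions gives $\hsw(H)=\Theta(\tw(F))$, so $n^{o(\hsw(H)/\ln\hsw(H))}=|V(G')|^{o(\tw(F)/\log\tw(F))}$, and the hypothetical algorithm would beat the barrier above, contradicting ETH.

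The main obstacle I anticipate is the lower‑bound half of (c): one must understand dag tree decompositions well enough to certify that no dag tree decomposition of $P^\star$ is thinner than an optimal tree decomposition of the incidence hypergraph of $F$. I expect this to go through a bramble or well‑linked‑set argument transported along the reachability relation, together with a careful check that subdivision and the tag gadgets never create an anomalously thin decomposition of \emph{some other} orientation of $H$ — which would push $\hsw(H)$ above $\Theta(\tw(F))$ and thereby weaken, rather than strengthen, the claimed bound. A second, more routine difficulty is verifying that the witness‑vertex sparsification keeps the partitioned‑subgraph instance faithful (no spurious copies, correct multiplicities); crucially, this step introduces no new logarithmic loss, so the $1/\log$ in the statement is inherited verbatim from the treewidth barrier.
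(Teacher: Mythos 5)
Your route is genuinely different from the paper's (which reduces from \emph{uncoloured} $k_0$-cycle counting via Theorem~\ref{thm:curticapean}, using clique gadgets on each edge so that every clique can host at most one source, thereby compressing $\hsw(H)$ to $k_0$ while $k=dk_0$), but it has a gap that breaks the exponent. The claim at the heart of your step (c) --- that $\hsw(H')=O(\tw(H'))$ for every $H'$, so that $\hsw(H)=\Theta(\tw(F))$ --- is false. By Lemma~\ref{thm:pw_tw}, $\hsw(H)=\Theta(\alpha(H))$: the dag treewidth is governed by the independence number, not the treewidth (the definition of $\hsw=\sw_3$ maximises over all supergraphs of $H$, and one can always add edges inside a large independent set to force every d.t.d.\ to be wide). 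Your $H$ is a bounded-degree graph (a subdivision of a cubic expander plus bounded-degree tags) on $k=\Theta(\tw(F)\log\tw(F))$ vertices, so $\alpha(H)=\Omega(k)$ and hence $\hsw(H)=\Theta(\tw(F)\log\tw(F))$, not $\Theta(\tw(F))$. The ``lower-bound half'' you flag as the main obstacle is actually fine; it is the upper-bound half that cannot be repaired. Consequently Marx's barrier of $|V(G')|^{o(\tw(F)/\log\tw(F))}$ translates only into $n^{o(\hsw(H)/\ln^2\hsw(H))}$: an algorithm running in time $n^{\tw(F)}$, say, is $n^{o(\hsw(H)/\ln\hsw(H))}$ yet does not contradict the partitioned-subgraph-isomorphism hardness. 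The $O(\log|V(F)|)$-sized tag gadgets are precisely what introduces this extra logarithm, and without them you lose the colour-respecting guarantee and can no longer invoke ``counting is at least as hard as deciding'' for the uncoloured problems $\sub{H,G}$ and $\ind{H,G}$.

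The fix used in the paper is to make the pattern \emph{dense} relative to its dag treewidth: each edge of a $k_0$-cycle is replaced by a clique $C_e$ on $d-1$ nodes joined to both endpoints, so that in any acyclic orientation each $C_e\cup\{u\}$ contributes at most one source and $\hsw(H)\le k_0$ even though $k=dk_0$; the same substitution applied to $G_0$ keeps the degeneracy at $d=f(k_0)$ and gives a clean bijection between $k_0$-cycles of $G_0$ and copies of $H$ in $G$, so the $n^{o(k_0/\log k_0)}$ barrier for cycle counting transfers verbatim to $n^{o(\hsw(H)/\ln\hsw(H))}$. If you want to salvage your approach, you would need a pattern whose independence number (not just treewidth) is $\Theta(\tw(F))$ while $k$ may be larger --- which is essentially what the clique gadgets achieve --- and a reduction that starts from a \emph{counting}-hard uncoloured problem rather than from the decision version of partitioned subgraph isomorphism.
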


\subsection{Preliminaries and notation}
\label{sub:prelim}
Both $G=(V,E)$ and $H=(V_H,E_H)$ are simple graphs, possibly disconnected.
For any subset $V' \subseteq V$ we denote by $G[V']$ the subgraph of $G$ induced by $V'$; the same notation applies to any graph.
A \emph{homomorphism} from $H$ to $G$ is a map $\phi : V_H \to V$ such that $\{u,u'\} \in E_H$ implies $\{\phi(u),\phi(u')\} \in E$.
We write $\phi : H \to G$ to highlight the edges that $\phi$ preserves.
When $H$ and $G$ are oriented, $\phi$ must preserve the direction of the arcs.
If $\phi$ is injective then we have an injective homomorphism.
We denote by $\homo{H,G}$ and $\inj{H,G}$ the number of homomorphisms and injective homomorphisms from $H$ to $G$.
To avoid confusion, we will use the symbol $\psi$ to denote maps that are not necessarily homomorphisms.
The symbol $\simeq$ denotes isomorphism.
A \emph{copy} of $H$ in $G$ is a subgraph $F \subseteq G$ such that $F \simeq H$.
If moreover $F \simeq G[V_F]$ then $F$ is an induced copy.
We denote by $\sub{H,G}$ and $\ind{H,G}$ the number of copies and induced copies of $H$ in $G$; we may omit $G$ if clear from the context. 
When we give an acyclic orientation to the edges of $H$, we denote the resulting dag by $P$.
All the notation described above applies to directed graphs in the natural way.

The \emph{degeneracy} of $G$ is the smallest integer $d$ such that there is an acyclic orientation of $G$ with maximum outdegree bounded by $d$.
Such an orientation can be found in time $O(|E|)$ by repeatedly removing from $G$ a minimum-degree node~\cite{Nesetril2012sparsity}.
From now on we assume that $G$ has this orientation.
Equivalently, $d$ is the smallest integer that bounds from above the minimum degree of every subgraph of $G$.

We assume the following operations take constant time: accessing the $i$-th arc of any node $u \in V$, and checking if $(u,v)$ is an arc of $G$ for any pair $(u,v)$.
Our upper bounds still hold if checking an arc takes time $O(\log n)$, which can be achieved via binary search if we first sort the adjacency lists of $G$.
The $\log n$ factor in our bounds appears since we assume logarithmic access time for our dictionaries, each of which holds $O(n^k)$ entries.
This factor can be removed by using dictionaries with worst-case $O(1)$ access time (e.g., hash maps), at the price of obtaining probabilistic/amortized bounds rather than deterministic ones.

Finally, we recall the tree decomposition and treewidth of a graph.
For any two nodes $X,Y$ in a tree $T$, we denote by $T(X,Y)$ the unique path between $X$ and $Y$ in $T$.
\begin{definition}[see~\cite{Diestel2017}, Ch.\ 12.3]
\label{def:treedecomp}
\label{def:treewidth}
Given a graph $G=(V,E)$, a tree decomposition of $G$ is a tree $D=(V_D,E_D)$ such that each node $X \in V_D$ is a subset $X \subseteq V$, and that\,\footnote{Formally, we should define a tree together with a mapping between its nodes and the subsets of $V$. However, the definition adopted here is sufficient for our purposes and lightens the notation.}:
\begin{enumerate}
\item[1.] $\cup_{X \in V_D} X = V$ 
\item[2.] for every edge $e = \{u,v\} \in G$ there exists $X \in D$ such that $u,v \in X$
\item[3.] $\forall\, X, X', X'' \in V_T$, if $X \in D(X', X'')$ then $X' \cap X'' \subseteq X$
\end{enumerate}
The width of a tree decomposition $T$ is $\tw(T) = \max_{X \in V_T} |X| - 1$.
The treewidth $\tw(G)$ of a graph $G$ is the minimum of $\tw(T)$ over all tree decompositions $T$ of $G$.
\end{definition}

\subsection{Related work}
As anticipated, the fastest algorithm known for computing $\ind{H,G}$ is the one by Ne\v{s}et\v{r}il and Poljak~\cite{Nesetril&1985} that runs in time $O(n^{\omega\lfloor\frac{k}{3}\rfloor + (k \bmod 3)})$ where $\omega$ is the matrix multiplication exponent.
With the current bound $\omega \le 2.373$, this running time is in $O(n^{0.791 k + 2})$.
Unfortunately, the algorithm is based on fast matrix multiplication, which makes it oblivious to the sparsity of $G$.

Under certain assumptions on $G$, faster algorithms are known.
If $G$ has bounded maximum degree, $\Delta=O(1)$, then we can compute $\ind{H,G}$ in time $c^k \cdot O(n)$ for some $c=c(\Delta)$ via multivariate graph polynomials~\cite{Patel&2018}.
If $G$ has treewidth $\tw(G) \le k$, and we are given a tree decomposition of $G$ of such width, then we can compute $\ind{H,G}$ in time $2^{O(k \log k)} O(n)$; see Lemma 18.4 of~\cite{Nesetril2012sparsity}.
When $G$ is planar, we obtain an $f(k) \, O(n)$ algorithm where $f$ is exponential in $k$~\cite{Eppstein1995}.
All these assumptions are stronger than bounded degeneracy, and the techniques cannot be extended easily.
A more general class that captures all these cases is that of nowhere-dense graphs~\cite{nowhere-dense}, for which there exist fixed-parameter-tractable subgraph counting algorithms~\cite{Grohe&FO}.
Nowhere dense graphs however do not include all bounded degeneracy graphs or all graphs with bounded average degree.

Even assuming $G$ has bounded degeneracy, algorithms faster than Ne\v{s}et\v{r}il-Poljak are known only when $H$ belongs to special classes.
The earliest result of this kind is the classic algorithm by Chiba and Nishizeki~\cite{Chiba&1985} to list all $k$-cliques in time $O(d^{k-1} n)$.
Eppstein showed that one can list all maximal cliques in time $O(d 3^{d/3} n)$~\cite{Eppstein&2010} and all non-induced complete bipartite subgraphs in time $O(d^3 2^{2d} n)$~\cite{Eppstein1994}.
These algorithms exploit the degeneracy ordering of $G$ in a way similar to ours.
In fact, our techniques can be seen as a generalization of~\cite{Chiba&1985} that takes into account the structure of $H$.
We note that a fundamental limitation of~\cite{Chiba&1985,Eppstein1994,Eppstein&2010} is that they lists all the copies of $H$, which for a generic $H$ might be $\Theta(n^k)$ even if $G$ has bounded degeneracy (for example if $H$ is the independent set).
In contrast, we list the copies of subgraphs $H$, and combine them to infer the number of copies of $H$.
To be more precise, we list the homomorphisms of $H$, which is another difference we have with~\cite{Chiba&1985,Eppstein1994,Eppstein&2010} and a point we have in common with previous work~\cite{Curticapean&2017}.

Regarding our ``dag tree decomposition'', it is inspired to the standard notion of tree decomposition of a graph, and it yields a similar dynamic program.
Yet, the similarity between the two decompositions is rather superficial; indeed, our dag-treewidth can be $O(1)$ when the treewidth is $\Omega(k)$, and vice versa.
Our decomposition is unrelated to the several notions of tree decomposition for directed graphs already known~\cite{Ganian2010digraph}.
Finally, our lower bounds are novel; no general lower bounds in terms of $d$ and of the structure of $H$ was available before.

\subsection{Manuscript organisation.}
In Section~\ref{sec:simple} we build the intuition with a gentle introduction to our approach.
In Section~\ref{sec:dec} we give our dag tree decomposition and the dynamic program for counting homomorphisms.
In Section~\ref{sec:twbound} we show how to compute good dag tree decompositions.
Finally, in Section~\ref{sec:lb} we prove the lower bounds.

\section{Exploiting degeneracy orientations}
\label{sec:simple}
We build the intuition behind our approach, starting from the classic algorithm for counting cliques by Chiba and Nishizeki~\cite{Chiba&1985}.
The algorithm begins by orienting $G$ acyclically so that $\max_{v \in G}d_{\text{out}}(v) \le d$, which takes time $O(|E|)$.
With $G$ oriented acyclically, we take each $v \in G$ in turn, enumerate every subset of $(k-1)$ out-neighbors of $v$, and check its edges.
In this way we can explicitly find all $k$-cliques of $G$ in time $O(k^2 d^{k-1} n)$. 
Observe that the crucial fact here is that an acyclically oriented clique has exactly one \emph{source}, that is, a node with no incoming arcs.
We would like to extend this approach to an arbitrary pattern $H$.
Since every copy of $H$ in $G$ appears with exactly one acyclic orientation, we take every possible acyclic orientation $P$ of $H$, count the copies of $P$ in $G$, and sum all the counts.
Thus, the problem reduces to counting the copies of an arbitrary dag $P$ in our acyclic orientation of $G$.

Let us start in the naive way.
Suppose $P$ has $s$ sources.
Fix a directed spanning forest $F$ of $P$.
This is a collection of $s$ directed disjoint trees rooted at the sources of $P$ (arcs pointing away from the roots).
Clearly, each copy of $P$ in $G$ contains a copy of $F$.
Hence, we can enumerate the copies of $F$ in $G$, and for each one check if it is a copy of $P$.
To this end, first we enumerate the $O(n^s)$ possible $s$-uples of $V$ to which the sources of $P$ can be mapped.
For each such $s$-uple, we enumerate the possible mappings of the remaining $k-s$ nodes of the forest.
This can be done in time $O(d^{k-s})$ by a straightforward extension of the out-neighbor listing algorithm above.
Finally, for each mapping we check if its nodes induce $P$ in $G$, in time $O(k^2)$.
The total running time is $O(k^2 d^{k-s} n^s )$.
Unfortunately, if $P$ is an independent set then $s=k$ and the running time is $O(k^2n^k)$, so we have made no progress over the naive algorithm.

At this point we introduce our first idea.
For reference we use the toy pattern $P$ in Figure~\ref{fig:cycle}.
Instead of enumerating the copies of $P$ in $G$, we decompose $P$ into two \emph{pieces}, $P(1)$ and $P(3,5)$.
Here, $P(1)$ denotes the subgraph of $P$ reachable from $1$ (that is, the transitive closure of $1$ in $P$). The same for $P(3)$ and $P(5)$, and we let $P(3,5)=P(3) \cup P(5)$.
Now we count the copies of $P(1)$, and then the copies of $P(3,5)$, hoping to combine the result in some way to obtain the count of $P$.
To simplify the task, we focus on counting homomorphisms rather than copies (see below).
Thus, we want to compute $\hom(P,G)$ by combining $\hom(P(1),G)$ and $\hom(P(3,5),G)$.

Now, clearly, knowing $\hom(P(1),G)$ and $\hom(P(3,5),G)$ is not sufficient to infer $\hom(P,G)$.
Thus, we need to solve a slightly more complex problem.
For every pair $x,y \in V(G)$, let $\phi : \{2,6\} \mapsto V(G)$ be the map given by $\phi(2)=x$ and $\phi(6)=y$. We let $\homo{P, G, (x,y)}$ be the number of homomorphisms of $P$ in $G$ whose restriction to $\{2,6\}$ is $\phi$.
By a counting argument one can immediately see that:
\begin{align}
\label{eqn:sumhom}
\homo{P,G} = \sum_{\phi  : \{2,6\} \to V(G)} \!\!\!\!\!\! \homo{P, G,\phi}
\end{align}
Thus, to compute $\hom(P,G)$ we only need to compute $\homo{P, G,\phi}$ for all possible $\phi$.
Now, define $\homo{P(1),G, \phi}$ and $\homo{P(3,5), G,\phi}$ with the same meaning as above.
A crucial observation is that $\{2,6\}$, the domain of $\phi$, is precisely the set of nodes in $P(1) \cap P(3,5)$.
It is not difficult to see that this implies:
\begin{align}
\homo{P, G,\phi} = \homo{P(1),G,\phi} \cdot \homo{P(3,5),G, \phi}
\label{eqn:toy_homo}
\end{align}
Thus, now our goal is to compute $\homo{P(1),G,\phi}$ and $\homo{P(3,5),G, \phi}$ for all $\phi : \{2,6\} \to V(G)$.
To this end, we list all $\phi_{P(1)} : P(1) \to G$ with the technique above, and for each such $\phi_{P(1)}$ we increment a counter associated to $(\phi_{P(1)}(2),\phi_{P(1)}(6))$ in a dictionary with default value $0$.
Thus, we obtain $\homo{P(1),G,\phi}$ for all $\phi:\{2,6\} \to V$.
Since $P(1)$ has one source, we enumerate $O(k^2 n)$ maps.
If the dictionary takes time $O(\log n)$ to access an entry, the total running time is $O(k^2 n \log n)$.
The same technique applied to $P(3,5)$ yields a running time of $O(k^2 n^2 \log n)$, since $P(3,5)$ has two sources.
Finally, we apply Equation~\ref{eqn:sumhom} by running over all entries in the first dictionary and retrieving the corresponding value from the second dictionary.
The total running time is $O(k^2 n^2 \log n)$, while enumerating the homomorphisms of $P$ would have required time $O(k^2 n^3)$.
\begin{figure}[h]
\centering
\begin{tikzpicture}[
scale=1.08
]
\pgfdeclarelayer{foreground}
\pgfsetlayers{background,main,foreground}
\def \n {6}
\def \radius {1cm}
\def \margin {10} 
\begin{pgfonlayer}{foreground}
\foreach \s in {1,...,\n}
{
  \node[graph] (\s) at ({180 + 360/\n * (\s - 1)}:\radius) {\s};
}
\draw[->, >=latex] (1) -- (2);
\draw[->, >=latex] (1) -- (6);
\draw[->, >=latex] (3) -- (2);
\draw[->, >=latex] (3) -- (4);
\draw[->, >=latex] (5) -- (4);
\draw[->, >=latex] (5) -- (6);
\end{pgfonlayer}
\begin{pgfonlayer}{main}
\filldraw[stain] plot [smooth cycle, tension=1.1] coordinates {($(2)+(.2,-.3)$) ($(1)+(-.4,0)$) ($(6)+(.2,+.3)$)};
\filldraw[stain] plot [smooth cycle, tension=.8] coordinates {($(2)+(-.3,-.2)$) ($(3)+(0,-.4)$) ($(4)+(.3,0)$) ($(5)+(0,+.4)$) ($(6)+(-.3,+.2)$) ($(1)+(.5,0)$)};
\end{pgfonlayer}
\end{tikzpicture}
\caption{Toy example: an acyclic orientation $P$ of $H=C_6$, decomposed into two pieces.}
\label{fig:cycle}
\end{figure}

Let us abstract the general approach from this toy example.
We want to decompose $P$ into a set of pieces $P_1,P_2,\ldots$ with the following properties: (i) each piece $P_i$ has a small number of sources $s(P_i)$, and (ii) we can obtain $\homo{P,G,\phi}$ by combining the homomorphism counts of the $P_i$.
This is achieved by the \emph{dag tree decomposition}, which we introduce in Section~\ref{sec:dec}.
Like the tree decomposition for undirected graphs, the dag tree decomposition leads to a dynamic program to compute $\homo{P,G}$.

\section{DAG tree decompositions}
\label{sec:dec}
Let $P=(V_P,A_P)$ be a directed acyclic graph.
We denote by $\roots_P$, or simply $\roots$, the set of nodes of $P$ having no incoming arc.
These are the \textit{sources} of $P$.
We denote by $V_P(u)$ the transitive closure of $u$ in $P$, i.e.\ the set of nodes of $P$ reachable from $u$, and we let $P(u) = P[V_P(u)]$ be the corresponding subgraph of $P$.
For a subset of sources $B \subseteq \roots$ we let $V_P(B) = \cup_{u \in B} V_P(u)$ and $P(B) = P[V_P(B)]$.
Thus, $P(B)$ is the subgraph of $P$ induced by all nodes reachable from $B$.
We call $B$ a \textit{bag} of sources.
We can now formally introduce our decomposition.

\begin{definition}[Dag tree decomposition]
\label{def:piecedecomp}
Let $P=(V_P,A_P)$ be a dag.
A dag tree decomposition (d.t.d.) of $P$ is a (rooted) tree $T=(\vt,\et)$ with the following properties:
\begin{enumerate}\itemsep2pt
\item each node $B \in \vt$ is a bag of sources $B \subseteq \roots_P$ 
\item $\bigcup_{B \in \vt} B = \roots_P$
\item \label{pr:joint_path}for all $B,B_1,B_2 \in T$, if $B \in T(B_1,B_2)$ then $V_P(B_1) \cap V_P(B_2) \subseteq V_P(B)$ 
\end{enumerate}
\end{definition}
One can see the similarity with the tree decomposition of an undirected graph (Definition~\ref{def:treedecomp}).
However, our dag tree decomposition differs crucially in two aspects.
First, the bags are subsets of $\roots$ rather than subsets of $V_P$.
This is because the time needed to list the homomorphisms between $P(B_i)$ and $G$ is driven by $n^{|B_i|}$.
Second, the path-intersection property (3) concerns the pieces reachable from the bags rather than the bags themselves.
The reason is that, to combine the counts of two pieces together, their intersection must form a separator in $P$ (similarly to the tree decomposition of an undirected graph).
The dag tree decomposition induces the following notions of \emph{width}, used throughout the rest of the article.
\begin{definition}
\label{def:pw}
The \emph{width} of $T$ is $\sw(T) = \max_{B \in \vt} |B|$.
The \emph{dag treewidth} $\sw(P)$ of $P$ is the minimum of $\sw(T)$ over all dag tree decompositions $T$ of $P$.
\end{definition}
Clearly $\sw(P) \in \{1,\ldots, k\}$ for any $k$-node dag $P$.
Figure~\ref{fig:dtd} shows a pattern $P$ together with a d.t.d.\ of width $1$.
We observe that $\sw(P)$ has no obvious relation to the treewidth $\tw(H)$ of $H$; see the discussion in Subsection~\ref{sub:incexc}.

\begin{figure}[h]
\centering
\hfill
\begin{tikzpicture}[
scale=1.2
]
\pgfdeclarelayer{foreground}
\pgfsetlayers{background,main,foreground}

\def \n {3}
\def \radius {1}
\def \margin {10} 
\begin{pgfonlayer}{foreground}
\node[graph] (v1) at (0,0) {$1$};
\node[graph] (v1a) at ($(v1)+(30:1)$) {};
\node[graph] (v1b) at ($(v1)+(90:1)$) {};
\draw[->,>=latex] (v1) -- (v1a);
\draw[->,>=latex] (v1) -- (v1b);
\draw[->,>=latex] (v1b) -- (v1a);
\node[graph] (v2) at ($(v1a)+(60:1)$) {2};
\node[graph] (v2a) at ($(v1a)+(0:1)$) {};
\node[graph] (v3) at ($(v2a)+(60:1)$) {3};
\node[graph] (v3a) at ($(v2a)+(0:1)$) {};
\draw[->,>=latex] (v1a) -- (v2a);
\draw[->,>=latex] (v2a) -- (v3a);
\draw[->,>=latex] (v2) -- (v1a);
\draw[->,>=latex] (v2) -- (v2a);
\draw[->,>=latex] (v3) -- (v2a);
\draw[->,>=latex] (v3) -- (v3a);
\node[graph] (v4) at ($(v2a)+(-60:1)$) {4};
\node[graph] (v4a) at ($(v2a)+(-120:1)$) {};
\draw[->,>=latex] (v4) -- (v2a);
\draw[->,>=latex] (v4) -- (v4a);
\node[graph] (v5) at ($(v4a)+(-120:1)$) {5};
\node[graph] (v5a) at ($(v4a)+(-60:1)$) {};
\draw[->,>=latex] (v5) -- (v5a);
\draw[->,>=latex] (v5) -- (v4a);
\draw[->,>=latex] (v5a) -- (v4a);
\end{pgfonlayer}
\def \d {.35}
\def \t {.8}
\begin{pgfonlayer}{main}
\filldraw[stain] plot [smooth cycle, tension=\t] coordinates {($(v1)+(-120:\d)$) ($(v1b)+(120:\d)$) ($(v1a)+(0,\d)$) ($(v2a)+(60:\d)$) ($(v2a)+(-60:\d)$)  ($(v1a)+(0,-\d)$)};
\filldraw[stain] plot [smooth cycle, tension=\t] coordinates {($(v2)+(0,\d)$) ($(v1a)+(-150:\d)$) ($(v2a)+(-30:\d)$)};
\filldraw[stain] plot [smooth cycle, tension=\t] coordinates {($(v3)+(0,\d)$) ($(v2a)+(-150:\d)$) ($(v3a)+(-30:\d)$)};
\filldraw[stain] plot [smooth cycle, tension=\t] coordinates {($(v2a)+(0,\d)$) ($(v4a)+(-150:\d)$) ($(v4)+(-30:\d)$)};
\filldraw[stain] plot [smooth cycle, tension=\t] coordinates {($(v4a)+(0,\d)$) ($(v5)+(-150:\d)$) ($(v5a)+(-30:\d)$)};
\end{pgfonlayer}
\end{tikzpicture}
\hfill
\begin{tikzpicture}[
scale=1.2,
every node/.style={circle, inner sep=0pt, minimum size=18, fill=white},
line width=.7pt
]
\node[graph] (v1) at (0,0) {\{1\}};
\node[graph] (v2) at ($(v1)+(-1,-1)$) {\{2\}};
\node[graph] (v3) at ($(v1)+(0,-1)$) {\{3\}};
\node[graph] (v4) at ($(v1)+(1,-1)$) {\{4\}};
\node[graph] (v5) at ($(v4)+(0,-1)$) {\{5\}};
\draw (v1) edge (v2);
\draw (v1) edge (v3);
\draw (v1) edge (v4);
\draw (v4) edge (v5);
\node[opacity=0] (phant) at ($(v5)+(0,-.4)$) {};
\end{tikzpicture}
\hfill\mbox{}
\caption{Left: a dag $P$ formed by five pieces. Right: a dag tree decomposition $T$ for $P$. Since $\sw(T)=1$ and the largest piece contains $4$ nodes, we can compute $\homo{P,G}$ in time $O(d^{3} n \log n)$.}
\label{fig:dtd}
\end{figure}

\subsection{Counting homomorphisms via dag tree decompositions}
For any $B \in \vt$ let $T(B)$ be the subtree of $T$ rooted at $B$.
We let $\oline{B}$ be the down-closure of $B$ in $T$, that is, the union of all bags in $T(B)$.
Consider $P(\oline{B})$, the subgraph of $P$ induced by the nodes reachable from $\oline{B}$ (note the difference with $P(B)$, which contains only the nodes reachable from $B$).
We compute $\homo{P(\oline{B}),G}$ in a bottom-up fashion over all $B$, starting with the leaves of $T$ and moving towards the root.
This is similar to the dynamic program given by the standard tree decomposition (see~\cite{Fomin&2010}).

As anticipated, we actually compute $\homo{P(\oline{B}),\phi}$, the number of homomorphisms that extend a fixed mapping $\phi$.
We need the following concept:
\begin{definition}
Let $P_1=(V_{P_1},A_{P_1}), P_2=(V_{P_2},A_{P_2})$ be two subgraphs of $P$, and let $\phi_1: P_1 \to G$ and $\phi_2 : P_2 \to G$ be two homomorphisms.
We say $\phi_1$ and $\phi_2$ \emph{respect} each other if $\phi_1(u)=\phi_2(u)$ for all $u \in V_{P_1} \cap V_{P_2}$.
\end{definition}
Given some $\phi$, we denote by $\homo{P_1,G,\phi_2}$ the number of homomorphisms from $P_1$ to $G$ that respect $\phi_2$.
We can now present our main algorithmic result.
\begin{theorem}
\label{thm:counting}
Let $P$ be any $k$-node dag, and $T=(\vt,\et)$ be a d.t.d.\ for $P$.
Fix any $B \in \vt$ as the root of $T$.
There is a dynamic programming algorithm \treeCount($P, T, B$) that in time $O(|\vt| k^2  d^{k-\sw(T)} n^{\sw(T)} \log n)$ computes $\homo{P(\oline{B}), G, \phi_B}$ for all $\phi_B : P(B) \to G$.
This is also a bound on the time needed to compute $\homo{P,G}$.
\end{theorem}
The proof of Theorem~\ref{thm:counting} is given in the next subsection.
Before continuing, let $f_{T}(k)$ be an upper bound on the time needed to compute a d.t.d.\ of minimum width with at most $2^k$ bags for a pattern on $k$ nodes.
We can show that such a d.t.d.\ always exists:
\begin{lemma}
\label{lem:small_dtd}
Any $k$-node dag $P$ has a minimum-width d.t.d.\ on at most $2^{k}$ bags.
\end{lemma}
\begin{proof}
We show that, if a d.t.d.\ $T=(\vt,\et)$ has two bags containing exactly the same sources, then one of the two bags can be removed.
This implies that there exists a minimum-width d.t.d.\ where every bag contains a distinct source set, which therefore has at most $2^k$ bags.
Suppose indeed $T$ contains two bags $X$ and $X'$ formed by the same subset of sources.
Let $B$ be the neighbor of $X$ on the unique path $T(X,X')$.
Let $T^*=(\vt^*,\et^*)$ be the tree obtained from $T$ by replacing the edge $\{B',X\}$ with $\{B',B\}$ for every neighbor $B' \ne B$ of $X$ and then deleting $X$.
Clearly $|\vt^*|=|\vt|-1$, and properties (1) and (2) of Definition~\ref{def:piecedecomp} are satisfied.
Let us then check property (3).
Consider a generic path $T^*(B_1,B_2)$ and look at the corresponding path $T(B_1,B_2)$.
If $T(B_1,B_2)$ does not contain edges that we deleted, then $T^*(B_1,B_2)=T(B_1,B_2)$.
In this case the property holds for any bag in $T^*(B_1,B_2)$ since it holds in $T$.
Suppose instead $T(B_1,B_2)$ contains edges that we deleted.
Then $T^*(B_1,B_2)$ contains the same bags of $T(B_1,B_2)$ save that $X$ is replaced by $B$.
Thus we only need to check that $V_P(B_1) \cap V_P(B_2) \subseteq V_P(B)$.
By property (3), $V_P(B_1) \cap V_P(B_2) \subseteq V_P(X)$.
Moreover, since by construction $B \in T(X,X')$, property (3) also gives $V_P(X) = V_P(X) \cap V_P(X') \subseteq V_P(B)$.
Thus $V_P(B_1) \cap V_P(B_2) \subseteq V_P(B)$.
Therefore $T^*$ is a d.t.d.\ for $P$.
\qed
\end{proof}
Then, as an immediate corollary of Theorem~\ref{thm:counting}, we have:
\begin{theorem}
\label{thm:homo_cost}
We can compute $\homo{P,G}$ in time $f_{T}(k) + O(k^2 2^k d^{k-\sw(P)} n^{\sw(P)} \log n)$.
\end{theorem}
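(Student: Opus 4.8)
The statement follows almost directly from Lemma~\ref{lem:counting}; the plan is to fix an optimal d.t.d.\ of $P$, run the dynamic program it provides, and aggregate the resulting counts.

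First I would compute a d.t.d.\ $T$ of $P$ with $\sw(T)=\sw(P)$. Since $P$ has at most $k$ sources, each bag is one of at most $2^{k}$ subsets of $\roots_P$, and a d.t.d.\ is a tree on a subfamily of these bags; enumerating all candidates and checking the three conditions of Definition~\ref{def:piecedecomp} costs time bounded by some $f(k)=O(f(d,k))$. I would then fix any bag $B\in\vt$ as the root of $T$; note that $|B|\le\sw(T)=\sw(P)$ for every bag, in particular for the root.

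Next I would invoke \treeCount($P,T,B$) of Lemma~\ref{lem:counting}, which returns $\homo{P(\oline{B}),\phi_B}$ for every homomorphism $\phi_B:P(B)\to G$ in time $f(d,k)\cdot\tilde{O}(n^{\sw(T)})=f(d,k)\cdot\tilde{O}(n^{\sw(P)})$. The key observation is that, for $B$ the root, $\oline{B}$ is the union of all bags of $T$, which equals $\roots_P$ by the covering property of a d.t.d.; and since in a dag every node is reachable from a source, $V_P(\oline{B})=V_P(\roots_P)=V_P$, so $P(\oline{B})=P$. Hence the algorithm in fact produces $\homo{P,\phi_B}$ for all homomorphisms $\phi_B:P(B)\to G$, and these counts partition the homomorphisms $P\to G$ by their restriction to $P(B)$ (a homomorphism $\phi:P\to G$ respects $\phi_B$ precisely when $\phi|_{V_P(B)}=\phi_B$).

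Finally I would output $\homo{P,G}=\sum_{\phi_B}\homo{P,\phi_B}$, the sum over all homomorphisms $\phi_B:P(B)\to G$. The number of such $\phi_B$ is at most $n^{|B|}d^{k}=f(d,k)\cdot O(n^{\sw(P)})$: one fixes the images of the $|B|$ sources of $P(B)$ in $n^{|B|}$ ways and then has at most $d$ choices for the image of each remaining node, extending along a fixed spanning forest of $P(B)$; so the final aggregation is dominated by $f(d,k)\cdot\tilde{O}(n^{\sw(P)})$, matching the claimed bound. I do not expect a genuine obstacle here: the substance is entirely inside Lemma~\ref{lem:counting}, which we may assume, and the only points needing a moment's care are the identity $P(\oline{B})=P$ at the root and the verification that aggregating over the root mappings $\phi_B$ stays within $\tilde{O}(n^{\sw(P)})$; both are handled above.
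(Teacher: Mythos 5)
Your proposal is correct and follows essentially the same route as the paper: find an optimal d.t.d.\ in $f(k)$ time, run \treeCount\ at the root $B$, observe that $P(\oline{B})=P$ since every node of a dag is reachable from some source, and sum the counts $\homo{P,\phi_B}$ over all $\phi_B$. The paper states this aggregation step more tersely, but your added verifications (the identity at the root and the bound on the number of root mappings) are exactly the right details and introduce no new ideas or gaps.
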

Theorem~\ref{thm:homo_cost} will be used in Section~\ref{sub:incexc} to prove the bounds for our original problem of counting the copies of $H$ via inclusion-exclusion arguments.

\subsubsection{Proof of Theorem~\ref{thm:counting}}
The algorithm behind Theorem~\ref{thm:counting} is similar to the one for counting homomorphisms using a tree decomposition.
To start, we prove that our dag tree decomposition enjoys a separator property similar to the one enjoyed by tree decompositions.
\begin{lemma}
\label{lem:separator}
Let $T$ be a rooted d.t.d.\ and let $B_1, \ldots, B_l$ be the children of $B$ in $T$.
Then for all $i \in [l]$:
\begin{enumerate}\itemsep0pt
\item[a.] $V_P(\oline{B_i}) \cap V_P(\oline{B_j}) \subseteq V_P(B)$ for all $j \ne i$
\item[b.] for any arc $(u,u') \in P(\oline B)$, if $u \in V_P(\oline{B_i}) \setminus V_P(B)$ then $u' \in V_P(\oline{B_i})$
\item[c.] for any arc $(u',u) \in P(\oline B)$, if $u \in V_P(\oline{B_i}) \setminus V_P(B)$ then $u' \in V_P(\oline{B_i})$
\end{enumerate}
\end{lemma}
\begin{proof}
We prove (a).
Suppose for some $ i \ne j$ we have $V_P(\oline{B_i}) \cap V_P(\oline{B_j}) \nsubseteq V_P(B)$.
So there exists some node $u \in V_P$ such that $u \in V_P(\oline{B_i})$, $u \in V_P(\oline{B_j})$, and $u \notin V_P(B)$.
By definition of $\oline{\cdot}$, this implies $u \in V_P(B_i')$ and $u \in V_P(B_j')$ for some bags $B_i' \in T(B_i)$ and $B_j' \in T(B_j)$.
Observe however that $B \in T(B_i',B_j')$.
Thus, by point (3) of Definition~\ref{def:piecedecomp}, we have $u \in V_P(B)$.
This contradicts the third inclusion, $u \notin V_P(B)$.

Now  we prove (b) and (c).
For (b), since $u \in V_P(\oline{B_i})$ and $(u,u') \in P$, then $u' \in V_P(\oline{B_i})$ too.
For (c), suppose by contradiction $u' \notin V_P(\oline{B_i})$.
Therefore, either $u' \in V_P(B)$, or $u' \in V_P(B_j')$ for some $B_j' \in \oline{B_j}$ with $j \ne i$.
In both cases however we have $u \in V_P(B)$: in the first case this holds since $u$ is reachable from $u'$, and in the second case since $B \in T(B_i,B_j')$ and by point (3) of Definition~\ref{def:piecedecomp}.
Thus in any case $u \in V_P(B)$, which contradicts again $u \notin V_P(B)$.
\qed
\end{proof}
Lemma~\ref{lem:separator} says that $V_P(B)$ is a separator for the sub-patterns $P(\oline{B_i})$ in $P$.
This allows us to compute $\homo{P(\oline{B})}$ by combining $\homo{P(\oline{B_1})},\ldots,\homo{P(\oline{B_l})}$.

Next, we show that each homomorphism $\phi$ of $P(\oline{B})$ is the \emph{juxtaposition} (definition below) of some $\phi_B$ of $B$ and some $\phi_1,\ldots,\phi_l$ of $\oline{B_1},\ldots,\oline{B_l}$, provided they respect $\phi_B$.
This establishes a bijection, implying that we can count the homomorphisms $\phi$ by multiplying the counts of the homomorphisms $\phi_B,\phi_1,\ldots,\phi_l$.

\begin{definition}
\label{def:juxt}
Let $\{\phi_1,\ldots,\phi_{\ell}\}$ be any set of homomorphisms, where for all $i=1,\ldots,\ell$ we have $\phi_i : X_i \to G$ and $\phi_i$ respects $\phi_j$ for all $j =1,\ldots,\ell$.
The \emph{juxtaposition} of $\phi_1,\ldots,\phi_{\ell}$, denoted by $\phi_1\ldots\phi_\ell$, is the homomorphism $\phi : \cup_{i=1}^{\ell} X_i \to G$ such that $\phi(u)=\phi_i(u)$ whenever $u \in X_i$.
\end{definition}
Note that the juxtaposition is always well-defined ad unique, since the $\phi_i$ respect each other and the image of every $u$ is determined by at least one among $\phi_1,\ldots,\phi_{\ell}$.
\begin{lemma}
\label{lem:homo_comp}
Let $T$ be a d.t.d.\ and let $B_1, \ldots, B_l$ be the children of $B$ in $T$.
Fix any $\phi_B : P(B) \to G$.
Let $\Phi(\phi_B) = \{\phi : P(\oline{B}) \to G \,|\, \phi \text{ respects } \phi_B\}$, and for $i=1,\ldots,l$ let $\Phi_i(\phi_B) = \{\phi : P(\oline{B_i}) \to G \,|\, \phi \text{ respects } \phi_B\}$.
Then there is a bijection between $\Phi(\phi_B)$ and $\Phi_1(\phi_B) \times \ldots \times \Phi_l(\phi_B)$, and therefore:
\begin{align}
\homo{P(\oline{B}),G,\phi_B} = \prod_{i=1}^{l}\homo{P(\oline{B_i}),G,\phi_B}
\end{align}
\end{lemma}
\begin{proof}
First, we show there is an injection between $\Phi(\phi_B)$ and $\Phi_1(\phi_B) \times \ldots \times \Phi_l(\phi_B)$.
Fix any $\phi \in \Phi(\phi_B)$, and consider the tuple $(\phi_1, \ldots, \phi_l)$ where
each $\phi_i$ is the restriction of $\phi$ to $P(\oline{B_i})$.
Note that $\phi_i$ is unique, and that it respects $\phi_B$ since $\phi$ does.
Thus $\phi_i \in \Phi_i(\phi_B)$.
It follows that $(\phi_1, \ldots, \phi_l) \in \Phi_1(\phi_B) \times \ldots \times \Phi_l(\phi_B)$.
Now we show there is an injection between $\Phi_1(\phi_B) \times \ldots \times \Phi_l(\phi_B)$ and $\Phi(\phi_B)$.
Consider any tuple $(\phi_1, \ldots, \phi_l) \in \Phi_1(\phi_B) \times \ldots \times \Phi_l(\phi_B)$, and consider the juxtaposition $\phi = \phi_B\phi_1\ldots\phi_l$.
Then $\phi : P(\oline{B}) \to G$ and $\phi$ respects $\phi_B$.
It follows that $\phi \in \Phi(\phi_B)$.
\qed
\end{proof}
Last, we bound the cost of enumerating the homomorphisms of a piece of $P$.
\begin{lemma}
\label{lem:listing2}
Given any $B \subseteq \roots$, the set of homomorphisms $\Phi = \{ \phi : P(B) \to G \}$ has size $O(d^{k-|B|} n^{|B|})$ and can be enumerated in time $O(k^2 d^{k-|B|} n^{|B|})$.
\end{lemma}
\begin{proof}
We prove the bound on the enumeration time; the proof gives immediately also the bound on $|\Phi|$.
Let $B=\{u_1,\ldots,u_b\}$ where $b = |B|$.
Fix a spanning forest $\{T_1, \ldots, T_b\}$ of $P(B)$, where each $T_i=(V_i,A_i)$ is a directed tree rooted at $u_i$ (arcs pointing away from the root).
Consider any $\phi \in \Phi$, and let $\phi_i$ be its restriction to $V_i$.
Clearly, $\phi=\phi_1\ldots\phi_{b}$.
Note that $\phi_i$ is a homomorphism of $T_i$ in $G$.
Thus, to enumerate $\Phi$ we can enumerate each possible tuples $(\phi_1,\ldots,\phi_b)$ where $\phi_i$ is a homomorphism of $T_i$ in $G$ for all $i$.
Note that not all such tuples give a valid juxtaposition that is a homomorphism $\phi \in \Phi$. However, we can check if $\phi \in \Phi$ in time $O(k^2)$ by checking the arcs between the images of $\phi$ in $G$.

Let then $\Phi_{T_i}$ be the set of homomorphisms of $T_i$ in $G$.
We show how to enumerate $\Phi_{T_i}$ in time $O(d^{|V_i|-1} n)$, and thus all tuples $(\phi_1,\ldots,\phi_b) \in \Phi_{T_1} \times \ldots \times \Phi_{T_b}$ in time $\prod_{i=1}^b O(d^{|V_i|-1} n) = O(d^{k-b} n^{b})$.
Together with the check on the arcs, this gives a total running time of $O(k^2 d^{k-b} n^{b})$ for enumerating $\Phi$, as desired.
To enumerate $\Phi_{T_i}$, we take each $v \in G$ and enumerate all $\phi_i \in \Phi_{T_i}$ such that $\phi_i(s_i)=v$.
To this end note that, once we have fixed $\phi_i(x)$, for each arc $(x,y) \in T_i$ we have at most $d$ choices for $\phi(y)$.
Thus we can enumerate all $\phi_i \in \Phi_{T_i}$ that map $s_i$ to $v$ in time $d^{|V_i|-1}$.
The total time to enumerate $\Phi_{T_i}$ is therefore $O(d^{|V_i|-1} n)$, as claimed.
\qed
\end{proof}

We can now describe our dynamic programming algorithm, \treeCount, to compute $\homo{P(\oline{B}),G}$.
Given a d.t.d.\ $T$ of $P$, the algorithm goes bottom-up from the leaves of $T$ towards the root, combining the counts using Lemma~\ref{lem:homo_comp}.
For readability, we write the algorithm in a recursive fashion.
\begin{algorithm}[h!]
\caption{\treeCount($P, T, B$)}
\begin{algorithmic}[1]
\State $C_{B} =$ empty dictionary with default value $0$
\If{$B$ is a leaf} \Comment{base case}
\For{every homomorphism $\phi_B : P(B) \to G$} \label{ln:enum1}
\State $C_B(\phi_B) = 1$ \label{ln:acc1}
\EndFor
\Else
\State let $B_1,\ldots,B_l$ be the children of $B$ in $T$
\For{$i=1,\ldots,l$} 
\State $C_{B_i} =$ \treeCount($P, T, B_i$) \label{ln:rec} \Comment{recur on the subtree of $T$ rooted at $B_i$}
\State $AGG_{B_i} =$ empty dictionary with default value $0$
\For{every key $\phi$ in $C_{B_i}$} \label{ln:reindex} \Comment{we aggregate the values of $C$}
  \State let $\phi_r$ be the restriction of $\phi$ to $V_P(B) \cap V_P(\oline{B_i})$
  \State $AGG_{B_i}(\phi_r) \pluseq C_{B_i}(\phi)$
\EndFor \label{ln:reindex2}
\EndFor
\For{every homomorphism $\phi : P(B) \to G$} \label{ln:enum2} \Comment{Lemma~\ref{lem:homo_comp} in action}
\State for $i=1,\ldots,l$ let $\phi_r$ be the restriction of $\phi$ to $V_P(B) \cap V_P(\oline{B_i})$
\State $C_B(\phi) = \prod_{i=1}^l AGG_{B_i}(\phi_{r})$ \label{ln:acc2}
\EndFor
\EndIf
\State \Return $C_B$
\end{algorithmic}
\end{algorithm}
We prove:
\begin{lemma}
\label{lem:pccorrect}
\label{lem:pccost}
Let $P$ be any dag, $T=(\vt,\et)$ any d.t.d.\ for $P$, and $B$ any element of $\vt$.
\treeCount($P, T, B$) in time $O(|\vt| \poly(k) d^{k-\sw(T)} n^{\sw(T)} \log{n})$ returns a dictionary $C_B$ that for all $\phi_B : P(B) \to G$ satisfies $C_B(\phi_B) = \homo{P(\oline{B}), G, \phi_B}$.
\end{lemma}
\begin{proof}
We first prove the correctness, by induction on the nodes of $T$.
The base case is when $B$ is a leaf of $T$.
In this case $P(B) = P(\oline{B})$, and the algorithm sets $C_B(\phi_B) = 1$ for each $\phi_B : P(B) \to G$.
Therefore $C_B(\phi_B) = \homo{P(\oline{B}), G, \phi_B}$ as desired.
The inductive case is when $B$ is an internal node of $T$.
As inductive hypothesis we assume that, for every child $B_i$ of $B$, the dictionary $C_{B_i}$ computed at line~\ref{ln:rec} satisfies $C_{B_i}(\phi) = \homo{P(\oline{B_i}),G,\phi}$ for every $\phi : P(B_i) \to G$.
Let $\Phi_{P(\oline{B_i})}$ be the set of homomorphisms from $P(\oline{B_i})$ to $G$, and let $\Phi_{P(\oline{B_i})}(\phi)$ be the subset of elements of $\Phi_{P(\oline{B_i})}$ that respect $\phi$.
Thus, the inductive hypothesis says that $C_{B_i}(\phi) = |\Phi_{P(\oline{B_i})}(\phi)|$ for every $\phi : P(B_i) \to G$.

Now consider the loop at lines~\ref{ln:reindex}--\ref{ln:reindex2}.
We claim that, after that loop, we have:
\begin{align}
\label{eqn:AGG}
AGG_{B_i}(\phi_r) 
= \!\!\sum_{\substack{\phi \text{ in } C_{B_i} \\ \phi \text{ respects } \phi_{r}}}
\!\!\!\!\!\!\! \!\!\!\! C_{B_i}(\phi) 
= \sum_{\substack{\phi \,:\, P(B_i) \to G \\ \phi \text{ respects } \phi_{r}}}
\!\!\!\!\!\!\! \!\!\!\! C_{B_i}(\phi) 
= \sum_{\substack{\phi \,:\, P(B_i) \to G \\ \phi \text{ respects } \phi_{r}}}
\!\!\!\!\!\!\! \!\!\!\! \big|\Phi_{P(\oline{B_i})}(\phi)\big|
= \big|\Phi_{P(\oline{B_i})}(\phi_r)\big|
\end{align}
The first equality holds since the loop adds $C_{B_i}(\phi)$  to $AGG_{B_i}(\phi_r)$ if and only if the restriction of $\phi$ to $V_P(B) \cap V_P(\oline{B_i})$ is $\phi_r$, that is, if and only if $\phi$ respects $\phi_r$.
The second equality holds since the keys of $C_{B_i}$ are a subset of $\{\phi \,:\, P(B_i) \to G\}$ and $C_{B_i}(\phi)=0$ if $\phi$ is not in $C_{B_i}$.
The third equality holds by the inductive hypothesis above.
The fourth equality holds since the sets $\Phi_{P(\oline{B_i})}(\phi)$ form a partition of $\Phi_{P(\oline{B_i})}(\phi_r)$.

Finally, consider the loop at lines~\ref{ln:enum2}--\ref{ln:acc2}.
We claim that line~\ref{ln:acc2} sets:
\begin{align}
C_B(\phi)
= \prod_{i=1}^l \big|\Phi_{P(\oline{B_i})}(\phi_r)\big|
= \prod_{i=1}^l \big|\Phi_{P(\oline{B_i})}(\phi)\big| 
= \homo{P(\oline{B}), G, \phi}
\end{align}
The first equality holds by coupling line~\ref{ln:acc2} and Equation~\eqref{eqn:AGG}.
The second equality holds since any element of $\Phi_{P(\oline{B_i})}$ respects $\phi$ if and only if it respects its restriction $\phi_r$ to $V_P(B) \cap V_P(\oline{B_i})$, thus $\Phi_{P(\oline{B_i})}(\phi_r) = \Phi_{P(\oline{B_i})}(\phi)$.
The last equality holds by definition of $\Phi_{P(\oline{B_i})}(\phi_r)$ and by Lemma~\ref{lem:homo_comp}.
This proves the correctness.

Let us turn to the running time.
We can represent a homomorphism $\phi$ as a tuple of nodes of $G$.
Now, if $B$ is a leaf in $T$, then the running time is dominated by the loop at line~\ref{ln:enum1}.
By the bound of Lemma~\ref{lem:listing2}, the loop performs $O(d^{k-|B|}n^{|B|})$ iterations.
Each iteration takes time $O(k^2)$ to check $\phi_B$ (see again Lemma~\ref{lem:listing2}), and time $O(k \log{n})$ to update $C_B(B, \phi)$ since $C_B$ contains $O(n^k)$ entries.
This gives a running time of $O(k^2 d^{k-|B|}  n^{|B|} \log{n})$.
If $B$ is an internal node of $T$, then the time taken by Lines~\ref{ln:reindex}--\ref{ln:reindex2} is dominated by the recursive calls at line~\ref{ln:rec}.
The loop at line~\ref{ln:enum2} follows the analysis above for each of the $l$ children of $B$, for a total running time of $O(k^2 l\, d^{k-|B|} n^{|B|} \log{n})$.
The total running time excluding recursive calls is then $O(k^2 l\, d^{k-|B|} n^{|B|} \log{n})$ as well.
The thesis follows by recursing on the subtrees of $T(B)$, noting that the sum of the number of children of all bags is at most $|\vt|$. 
\qed
\end{proof}

\subsection{Inclusion-exclusion arguments and the dag-treewidth of undirected graphs}
\label{sub:incexc}
We turn to computing $\homo{H,G}$, $\sub{H,G}$ and $\ind{H,G}$.
We do so via standard inclusion-exclusion arguments, using our algorithm for computing $\homo{P,G}$ as a primitive.
To this end we shall define appropriate notions of width for undirected pattern graphs.
Let $\Sigma(H)$ be the set of all dags $P$ that can be obtained by orienting $H$ acyclically.
Let $\Theta(H)$ be the set of all equivalence relations on $V_H$ (that is, all the partitions of $V_H$), and for $\theta \in \Theta(H)$ let $H/\theta$ be the pattern obtained from $H$ by identifying equivalent nodes according to $\theta$ and removing loops and multiple edges.
Let $D(H)$ be the set of all supergraphs of $H$ on the node set $V_H$, including $H$.
\begin{definition}
\label{def:widths}
The \emph{dag treewidth} of $H$ is $\hsw(H) = \sw_3(H)$, where:
\begin{align}
\sw_1(H) &= \max\{ \sw(P) : {P \in \Sigma(H)} \}\\
\sw_2(H) &= \max\{ \sw_1(H/\theta) : {\theta \in \Theta(H)} \}\\
\sw_3(H) &= \max\{ \sw_2(H') : {H' \in D(H)} \}
\end{align}
\end{definition}
Note that $\sw(H)$ is unrelated to the treewidth $\tw(H)$.
For example, when $H$ is a clique we have $\tw(H)=k$ and $\sw(H)=1$; when $H$ is the independent set we have $\tw(H)=1$ and $\sw(H)=\Theta(k)$, see Lemma~\ref{thm:pw_tw}; and when $H$ is an expander we have $\tw(H),\sw(H) \in \Theta(k)$, see again Lemma~\ref{thm:pw_tw}.
In fact, $\sw(H)$ is within constant factors of the independence number $\alpha(H)$ of $H$ (see Section~\ref{sub:alpha}), and thus decreases as $H$ becomes denser.
This happens because adding arcs increases the number of nodes reachable from the sources of $P \in \Sigma(H)$, so we may need fewer sources to reach a given piece of $P$.
When $H$ is a clique, $P$ is reachable from just one source and thus $\sw(H)=1$.

Clearly, $\sw_1(H) \le \sw_2(H) \le \sw(H)$.
The intuition behind $\sw_1(H)$ is that, in $G$, each homomorphism of $H$ corresponds to a homomorphism of some acyclic orientation $P$ of $H$.
Thus to compute $\homo{H,G}$ we sum $\homo{P,G}$ over all orientation $P$ of $H$, and the running time is dominated by the $P$ with largest treewidth.
The intuition behind $\sw_2(H)$ is similar but now we look at computing $\sub{H,G}$.
Since homomorphisms can map different nodes of $H$ to the same node of $G$, to recover $\sub{H,G}$ we must combine $\homo{H',G}$ for all possible $H' = H/\theta$ through inclusion-exclusion arguments.
The intuition behind $\sw_3(H)$ is that to compute $\ind{H,G}$ we must remove from $\sub{H,G}$ the counts of $\sub{H',G}$ for certain supergraphs $H'$ of $H$.
Indeed, the three measures $\sw_1(H),\sw_2(H),\sw(H)$ yield:
\begin{theorem}
\label{thm:wrapping}
Consider any $k$-node pattern graph $H=(V_H,E_H)$, and let $f_{T}(k)$ be an upper bound on the time needed to compute a d.t.d.\ of minimum width on $2^{O(k \log k)}$ bags for any $k$-node dag.
Then one can compute:
\begin{itemize}
\item $\homo{H,G}$ in time $2^{O(k \log k)} \cdot O(f_{T}(k) + d^{k-\sw_1(H)} n^{\sw_1(H)} \log n)$,
\item $\sub{H,G}$ in time $2^{O(k \log k)} \cdot O(f_{T}(k) + d^{k-\sw_2(H)} n^{\sw_2(H)} \log n)$,
\item $\ind{H,G}$ in time $2^{O(k^2)} \cdot O(f_{T}(k) + d^{k-\sw(H)} n^{\sw(H)} \log n)$.
\end{itemize}
The claim still holds if we replace $\tau_1,\tau_2,\sw$ with upper bounds, and $f_{T}(k)$ with the time needed to compute a d.t.d.\ on $2^{O(k \log k)}$ bags that satisfies those upper bounds.
\end{theorem}

\begin{proof}
We prove the three bounds in three separated steps. The last claim follows straightforwardly.

\paragraph{From dags to undirected patterns.}
Let $H$ be any undirected pattern.
First, note that:
\begin{align}
\label{eq:homo_sum}
\homo{H,G} = \sum_{P \in \Sigma(H)} \!\!\!\!\homo{P,G}
\end{align}
Let indeed $\Phi(H)=\{\phi : H \to G\}$ be the set of homomorphisms from $H$ to $G$.
Similarly, for any $P\in \Sigma(H)$ define $\Phi(P)=\{\phi_P : P \to G\}$ (note that $\phi_P$ must preserve the direction of the arcs).
Then, there is a bijection between $\Phi(H)$ and $\cup_{P \in \Sigma(H)} \Phi(P)$.
Consider indeed any $\phi \in \Phi(H)$.
Let $\sigma$ be the orientation of $H$ that assigns to $\{u,v\} \in E_H$ the orientation of $\{\phi(u),\phi(v)\}$ in $G$, and let $P=H_{\sigma}$.
Then $\phi$ is a homomorphism of $P$ in $G$.
On the other hand consider any homomorphism $\phi \in \Phi(P)$ for some acyclic orientation $P$ of $H$.
By ignoring the orientation of the edges, $\phi \in \Phi(H)$, too.

Thus, to compute $\homo{H,G}$ we compute $\homo{P,G}$ for all  $P \in \Sigma(H)$ and apply Equation~\ref{eq:homo_sum}.
Clearly, enumerating $|\Sigma(H)|$ takes time $O(k!) = 2^{O(k \log k)}$.
For each $P$, by Lemma~\ref{lem:small_dtd} in time $f_{T}(k)$ we compute a d.t.d.\ $T=(\vt,\et)$ of width $\tau(P)$ such that $|\vt|\le 2^k$.
Then, by Lemma~\ref{lem:pccost} we compute $\homo{P,G}$ in time $O(2^k \poly(k) d^{k-\tau(P)} n^{\tau(P)} \log n)$.
Thus, we can compute every $\homo{P,G}$ in time $O(f_{T}(k) + 2^{O(k)} d^{k-\tau(P)} n^{\tau(P)} \log n)$.
Multiplying by $2^{O(k \log k)}$ gives the first bound of the theorem.

\paragraph{From homomorphisms to non-induced copies.}
Recall that $H/\theta$ is the graph obtained from $H$ by identifying the nodes in the same equivalence class and removing loops and multiple edges, where $\theta \in \Theta(H)$ is an equivalence relation (or partition) over $V_H$.
Then, by Equation~15 of~\cite{Borgs&2006}:
\begin{align}
\label{eqn:inj}
\inj{H,G} = \sum_{\theta \in \Theta(H)} \!\!\!\! \mu(\theta) \, \homo{H/\theta, G}
\end{align}
where $\mu(\theta) = \prod_{A \in \theta} (-1)^{|A|-1}(|A|-1)!$, where $A$ runs over the equivalence classes (the sets) in $\theta$.
Thus, to compute $\inj{H,G}$, we enumerate all $\theta \in \Theta(H)$, compute $\homo{H/\theta, G}$, and apply Equation~\ref{eqn:inj}.
It is known that $|\Theta| = 2^{O(k \ln{k})}$ (see e.g.~\cite{Berend&2010}), and clearly for each $\theta$ we can compute $\mu(\theta)$ and $H/\theta$ in $O(\poly(k))$.
Thus, the first bound of the theorem holds for computing $\inj{H,G}$ too.
Finally, we compute $\sub{H,G} = \frac{\inj{H,G}}{\aut{H}}$, where $\aut{H}$ is the number of automorphisms of $H$, which can be computed in time $2^{O(\sqrt{k \ln k})}$~\cite{Mathon1979}.
This proves the second bound of the theorem.

\paragraph{From non-induced to induced.}
Finally, let $D(H)$ be the set of all supergraphs of $H$ on the same node set.
Then from Equation~14 of \cite{Borgs&2006}:
\begin{align}
\label{eqn:ind}
\ind{H,G} = \sum_{H' \in D(H)} \!\!\!\!(-1)^{|E_{H'} \setminus E_H|} \, \inj{H',G}
\end{align}
To compute $\ind{H,G}$, we take every $H'\in D(H)$, compute $\inj{H',G}$, and apply Equation~\ref{eqn:ind}.
Since $|D(H)| \le 2^{k^2}$, the third bound of the theorem follows.
\qed
\end{proof}

The algorithmic part of our work is complete.
We shall now focus on computing good dag tree decompositions, so to instantiate Theorem~\ref{thm:wrapping} and obtain the upper bounds of Section~\ref{sub:results}.

\section{Computing good dag tree decompositions}
\label{sec:twbound}
In this section we show how to compute dag tree decompositions of low width.
First, we show that for every $k$-node dag $P$ we can compute in time $2^{O(k)}$ a dag tree decomposition $T$ that satisfies $\hsw(T) \le \lfloor \frac{k}{4}\rfloor+2$.
This result requires a nontrivial proof. As a corollary, we prove Theorem~\ref{thm:mainbound} and Theorem~\ref{thm:lowavgd}.
Second, we give improved bounds for cliques minus $\epsilon$ edges; as a corollary, we prove Theorem~\ref{thm:dense_ub}.
Third, we give improved bounds for complete multipartite graphs plus $\epsilon$ edges; as a corollary, we prove Theorem~\ref{thm:bipartite}.
Finally, we show that $\Omega(\alpha(H)) \le \hsw(H) \le \alpha(H)$ where $\alpha(H)$ is the independence number of $H$, which is of independent interest.
This implies that the trivial decomposition on one bag has width that is asymptotically optimal.

To proceed, we need some additional notation.
For a dag $P$, we say $v \in V_P$ is a \emph{joint} if it is reachable from at least two sources, i.e., if $v \in V_P(u) \cap V_P(u')$ for some $u,u' \in \roots$ with $u \ne u'$.
Let $\mkey$ be the set of joints of $P$.
We write $\mkey(u)$ for the set of joints reachable from $u$, and for any $X \subseteq V_P$ we let $\mkey(X) = \cup_{u \in X} \mkey(u)$.
Similarly, we denote by $\roots(y)$ the sources from which $y$ is reachable, and we let $\roots(X) = \cup_{u \in X} \roots(u)$.

\subsection{A bound for all patterns}
\label{sub:general_twbound}
This subsection is devoted to prove: 
\begin{restatable}{theorem}{widthbound}
\label{thm:widthbound}
For any dag $P=(V_P,A_P)$, in time $O(1.7549^k)$ we can compute a dag tree decomposition $T=(\vt,\et)$ with  $\sw(T) \le \min(\lfloor\frac{e}{4}\rfloor,\lfloor\frac{k}{4}\rfloor) + 2$ and $|\vt|=O(k)$, where $k=|V_P|$ and $e=|A_P|$.
\end{restatable}
\noindent By combining Theorem~\ref{thm:widthbound} with Definition~\ref{def:widths} and Theorem~\ref{thm:wrapping},  we obtain as a corollary Theorem~\ref{thm:mainbound}.
The proof of Theorem~\ref{thm:widthbound} is divided in four steps, as follows.
First (step 1), we remove the ``easy'' pieces of $P$; this can break $P$ into several connected components, and we show that their d.t.d.'s can be composed into a d.t.d.\ for $P$.
Next, we show that if the $i$-th component has $k_i$ nodes, then it admits a d.t.d.\ of width $\frac{k_i}{4}+2$.
This requires to ``peel'' the component to remove its tree-like parts (step 2) and decomposing the remainder using a reduction to standard tree decompositions (step 3).
Finally, we wrap up our results and conclude the proof (step 4).

Throughout the proof, the relevant structure of $P$ is encoded by a graph that we call the \emph{skeleton} of $P$, defined as follows.
\begin{definition}
The \emph{skeleton} of a dag $P=(V_P,A_P)$ is the bipartite dag $\skel{P} = (V_{\sk},E_{\sk})$ where $V_{\sk} = \roots \cup \mkey$ and $E_{\sk} \subseteq \roots \times \mkey$, and $(u,v) \in E_{\sk}$ if and only if $v \in J(u)$.
\end{definition}
\noindent 
Figure~\ref{fig:tree_decomp} gives an example.
Note that $\skel{P}$ does not contain nodes that are neither sources nor joints, as they are irrelevant to the d.t.d..
Note also that computing $\skel{P}$ takes time $O(\poly(k))$.

\begin{figure}[h]
\centering
\resizebox{.9\textwidth}{!}{%
\begin{tikzpicture}[
scale=0.6
]
\node[graph] (r1) at (0,0) {1};
\node[graph] (r2) at (2,0) {2};
\node[graph] (r3) at (4,0) {3};
\node[graph] (r4) at (6,0) {4};
\node[graph] (v1) at ($(r1)+(-1,-2)$) {5};
\node[graph] (v2) at ($(v1)+(2,0)$) {6};
\node[graph] (v3) at ($(v2)+(2,0)$) {7};
\node[graph] (v4) at ($(v3)+(2,0)$) {8};
\node[graph] (v5) at ($(v4)+(2,0)$) {9};
\path[->] (r1) edge (v1);
\path[->] (r1) edge (v2);
\path[->] (r2) edge (v2);
\path[->] (r2) edge (v3);
\path[->] (r2) edge (v4);
\path[->] (r3) edge (v3);
\path[->] (r3) edge (v4);
\path[->] (r3) edge (v5);
\path[->] (r4) edge (v5);
\path[->] (v4) edge (v3);
\end{tikzpicture}
\hspace*{40pt}
\begin{tikzpicture}[
scale=0.6
]
\node[source] (r1) at (0,0) {1};
\node[source] (r2) at (2,0) {2};
\node[source] (r3) at (4,0) {3};
\node[source] (r4) at (6,0) {4};
\node[joint] (j6) at ($(r1)+(0,-2)$) {6};
\node[joint] (j7) at ($(j6)+(2,0)$) {7};
\node[joint] (j8) at ($(j7)+(2,0)$) {8};
\node[joint] (j9) at ($(j8)+(2,0)$) {9};
\path[->] (r1) edge (j6);
\path[->] (r2) edge (j6);
\path[->] (r2) edge (j7);
\path[->] (r2) edge (j8);
\path[->] (r3) edge (j7);
\path[->] (r3) edge (j8);
\path[->] (r3) edge (j9);
\path[->] (r4) edge (j9);
\end{tikzpicture}
}
\caption{Left: a dag $P$. Right: its skeleton $\skel{P}$ (sources $\roots$ above, joints $\mkey$ below).}
\label{fig:tree_decomp}
\end{figure}

Let us now delve into the proof.
For any node $x$, we denote by $d_x$ the current degree of $x$ in the skeleton.

\medskip
\textbf{Step 1: Greedy bag construction.}
Set $B^{(0)}=\emptyset$ and let $\sk^{(0)} = (V_{\sk}^{(0)}, E_{\sk}^{(0)}) = \sk$.
Set $j=0$ and proceed iteratively as follows, recalling that $V_{\sk}^{(j)} = \roots_{\sk}^{(j)}\cup \mkey_{\sk}^{(j)}$.
For any source $u \in \roots_{\sk}^{(j)}$ let $V_{\sk}^{(j)}(u)$ be the transitive closure of $u$ in $\sk^{(j)}$.
If there exists a source $u \in \roots_{\sk}^{(j)}$ such that $|V_{\sk}^{(j)}(u)| \ge 4$, then let $B^{(j+1)}=B^{(j)} \cup \{u\}$, and let $\sk^{(j+1)}$ be obtained from $\sk^{(j)}$ by removing $V_{\sk}^{(j)}(u)$ from $V_{\sk}^{j}$.
Repeat the procedure until $|V_{\sk}^{(j)}(u)| \le 3$ for all $u$.
Suppose the procedure stops at $j=j^*$, producing the subset $B^*=B^{(j^*)}$ and the residual skeleton $\sk^* = \sk^{(j^*)} = (V_{\sk}^*, E_{\sk}^*)$. 
\begin{lemma}
\label{lem:bstar}
$|B^*| \le \min\big( \frac{k-|V_{\sk}^*|}{4}, \frac{e- |E_{\sk}^*|}{4}\big)$, where $k=|V_P|$ and $e=|A_P|$.
\end{lemma}
\begin{proof}
For the first term of the $\min()$, note that at each step we remove at least $4$ nodes from $\sk^{(j)}$ and add one node to $B^{(j)}$.
Hence $4|B^*| \le |V_{\sk} \setminus V_{\sk}^*| \le k-|V_{\sk}^*|$, which implies $|B^{*}| \le \frac{k-|V_{\sk}^{*}|}{4}$.
For the second term of the $\min()$, we show that the set of nodes $V_{\sk}^{(j)}(u)$ removed at step $j$ identifies at least $4$ unique arcs of $P$.
To this end, consider the sub-pattern $P^{(j)} = P \setminus P(B^{(j)})$ containing all nodes not reachable from $B^{(j)}$.
Note that $\sk^{(j)}$ is the skeleton of $P^{(j)}$.
Indeed, if $v \in {\mkey}^{(j)}$ then $v$ is not reachable from any source in $B^{(j)}$, since otherwise $v$ would have been removed before step $j$.
Thus, at step $j$ we are removing at least $3$ joint nodes of $P^{(j)}(u)$.
Therefore, $P^{(j)}(u)$ contains at least three arcs pointing to its joints.
In addition, by definition, the joints of $P^{(j)}(u)$ must be reachable from some node in $P^{(j)} \setminus P^{(j)}(u)$.
Thus there is an arc from $P^{(j)} \setminus P^{(j)}(u)$ to a node of $P^{(j)}(u)$, and this node is therefore a joint itself.
Thus, $P^{(j)}$ contains at least $4$ arcs pointing to the joints of $P^{(j)}(u)$.
Since the joints of $P^{(j)}(u)$ are then removed from $P^{(j)}$, these arcs are counted only once.
Hence $e \ge 4|B^{*}| + |E_{\sk}^{*}|$, and $|B^{*}| \le \frac{e- |E_{\sk}^{*}|}{4}$.
\qed
\end{proof}

Now, if $B^{*}=\roots$, then $T=(\{B^{*}\}, \emptyset)$ is a d.t.d.\ of $P$ whose width is $\sw(T) = |B^{*}|$.
By Lemma~\ref{lem:bstar}, $|B^{*}| \le \min(\frac{k}{4}, \frac{e}{4})$, so Theorem~\ref{thm:widthbound} is proven.

If instead $B^{*} \subset \roots$, then $\sk^*$ has $\ell \ge 1$ nonempty connected components.
For each $i=1,\ldots,\ell$ let $\sk_i=(\roots_i \cup \mkey_i, E_i)$ be the $i$-th component of $\sk^*$.
Let $P^* = P \setminus P(B^*)$, and let $P_i = P^{*}(\roots_i)$.
Then, $\sk_i$ is the skeleton of $P_i$; this follows from the same argument used in the proof of Lemma~\ref{lem:bstar}.
We shall now see that we can obtain a d.t.d.\ for $P$ by arranging the d.t.d.'s of the $P_i$ into a tree, and adding $B^*$ to all bags.

\begin{lemma}
\label{lem:compose}
For each $i=1,\ldots,\ell$ let $T_i=(\vt_i,\et_i)$ be a d.t.d.\ of $P_i$.
Consider the tree $T$ obtained as follows.
The root of $T$ is the bag $B^{*}$, and the subtrees below $B^{*}$ are $T_1,\ldots,T_{\ell}$, where each bag $B \in T_i$ has been replaced by $B \cup B^{*}$.
Then $T=(\vt,\et)$ is a d.t.d.\ of $P$ with $\sw(T) \le |B^{*}| + \max_{i=1,\ldots,\ell}\sw(T_i)$ and $|\vt|=1+\sum_{i=1}^{\ell}|\vt_i|$.
\end{lemma}
\begin{proof}
The claims on $\sw(T)$ and $|\vt|$ are straightforward.
Let us check that $T$ is a d.t.d.\ of $P$, via Definition~\ref{def:piecedecomp}.
Property (1) is immediate.
For property (2), note that $\cup_{B \in \vt_i} = \roots_i$ because $T_i$ is by hypothesis a d.t.d.\ of $\sk_i$. Thus $\cup_{B \in \vt} = B^{*} \cup (\cup_{i=1}^\ell \roots_i) = \roots_P$.
We turn to property (3).
Choose any two bags $B' \cup B^{*}$ and $B'' \cup B^{*}$ of $T$, where $B' \in T_i$ and $B'' \in T_j$ for some $i,j\in \{1,\ldots,\ell\}$, and any bag $B \cup B^{*} \in T(B' \cup B^{*}, B'' \cup B^{*})$.
Suppose first $i=j$; thus by construction $B \in T(B',B'')$.
Since $T_i$ is a d.t.d., then $\mkey_i(B') \cap \mkey_i(B'') \subseteq \mkey_i(B)$, and in $T$ this implies $V_P(B' \cup B^{*}) \cap V_P(B'' \cup B^{*}) \subseteq V_P(B \cup B^{*})$.
Suppose instead $i \ne j$.
Thus $\mkey_i(\roots_i) \cap \mkey_j(\roots_j) = \emptyset$ and this means that $\mkey(\roots_i) \cap \mkey(\roots_j) \subseteq \mkey(B^{*})$.
But $V_P(B_i) \cap V_P(B_j) \subseteq \mkey(\roots_i) \cap \mkey(\roots_j)$ and $\mkey(B^{*}) \subseteq V_P(B^{*})$, thus $V_P(B_i) \cap V_P(B_j) \subseteq V_P(B^{*})$.
It follows that for every bag $B \cup B^{*}$ of $T$ we have $V_P(B_i \cup B^{*}) \cap V_P(B_j \cup B^{*}) \subseteq V_P(B \cup B^{*})$.
\qed
\end{proof}

\medskip
\textbf{Step 2: Peeling $\sk_i$.}
We now remove the tree-like parts of $\sk_i$.
These include, for instance, sources that have only one reachable joint.
For each such source, we create a dedicated bag which becomes the child of another bag that reaches the same joint.
This removes a source without increasing the width of the decomposition.

The construction is recursive.
Let $P_i^{(0)} = P_i$ and $\sk_i^{(0)}=(\roots_i^{(0)} \cup \mkey_i^{(0)}, E_{i}^{(0)})=\sk_i$.
Set $j=0$.
For any node $x \in \sk_i^{(j)}$, we denote by $d_u^{(j)}$ its degree in $\sk_i^{(j)}$.
We will show that the tree $T^{(0)}$ returned by our recursive construction is a d.t.d.\ for $P_i^{(0)}=P_i$.

The base case is $|\roots_i^{(j)}|=1$.
In this case we set $T_i^{(j)}=(\{\roots_i^{(j)}\}, \emptyset)$.
Clearly, $T_i^{(j)}$ is a d.t.d.\ for $P_i^{(j)}$ of width $1$ and we are done.
Suppose instead $|\roots_i^{(j)}|>1$.
Recall that $d_u^{(j)} \le 2$ for all $u \in \roots_i^{(j)}$.
Consider the first one of these three cases that applies (if none of them does, then we stop):
\begin{enumerate}\itemsep3pt
\item $\exists\, u \in S_i^{(j)} \,:\, d_u^{(j)} = 1$. Then choose any such $u$, and we choose any $u' \in S_i^{(j)} \setminus \{u\}$ with $\mkey_i^{(j)}(u) \cap \mkey_i^{(j)}(u') \ne \emptyset$.
\item $\mkey_i^{(j)}(u)=\mkey_i^{(j)}(u')$ for some $u, u' \in \roots_i^{(j)}$ with $u \ne u'$. Then, choose any such $u,u'$. 
\item $\exists\, v \in \mkey_i^{(j)} \,:\, d_v^{(j)} = 1$. Then choose any such $v$, let $u$ be the unique source such that $v \in \mkey_i^{(j)}(u)$, and let $u' \ne u$ be any source with $\mkey_i^{(j)}(u) \cap \mkey_i^{(j)}(u') \ne \emptyset$.
\end{enumerate}
Then, we define $T_i^{(j)}$ recursively as follows.
Let $P_i^{(j+1)} = P_i(\roots_i^{(j)} \setminus \{u\})$, and let $\sk_i^{(j+1)}$ be the skeleton graph obtained from $\sk_i^{(j)}$ by removing $u$ and (for the third case) the node $v \in \mkey_i^{(j)}(u)$ that is reachable only from $u$.
We invoke the procedure recursively on $\sk_i^{(j+1)}$.
Suppose the recursive procedure returns a d.t.d.\ $T_i^{(j+1)}$ of $P_i^{(j+1)}$.
Then, $T_i^{(j+1)}$ must contain a bag $B'$ such that $u' \in B'$.
Create the bag $B_{u}=\{u\}$, set it as a child of $B'$ in $T_i^{(j+1)}$, and let the resulting tree be $T_i^{(j)}$.
Let us check that $T_i^{(j)}$ is a d.t.d.\ for $P_i^{(j)}$.
Properties (1) and (2) of Definition~\ref{def:piecedecomp} are obviously satisfied.
For property (3), since $B_u$ is a leaf,  we only need to check that $\mkey_i^{(j)}(B_u) \cap \mkey_i^{(j)}(B'') \subseteq \mkey_i^{(j)}(B')$ for all $B'' \in T_i^{(j)}$ and all $B' \in T_i^{(j)}(B_u,B'')$.
To this end note that, by the choice of $u$ and $u'$, for any $u'' \in \roots_i^{(j)} \setminus \{u,u'\}$ we have $\mkey_i^{(j)}(u) \cap \mkey_i^{(j)}(u'') \subseteq \mkey_i^{(j)}(u')$.
We repeat the entire procedure until we reach the base case, or until $|\roots_i^{(j)}|>1$ and none of the three cases above holds, in which case we move to the next phase.

Before continuing, we make sure that the procedure above is well defined; we must guarantee that, in each of the three cases, the node $u'$ exists.
One can see that $u'$ exists whenever $|\roots_i^{(j)}| > 1$ (which is true by hypothesis) and $\sk_i^{(j)}$ is connected.
To see that $\sk_i^{(j)}$ is connected, note that if this was not the case then in some step $h < j$ we removed a source $u$ with $d_u^{(h)} = 2$ such that no other source $u'$ has $\mkey_i^{(h)}(u)=\mkey_i^{(h)}(u')$.
However, this cannot happen by construction of the procedure.

\begin{figure}[h]
\centering
\resizebox{.87\textwidth}{!}{%
\begin{tikzpicture}[
scale=0.45
]
\def\x{2}
\def\xx{3}
\def\y{3}
\node[source] (r1) at (0,0) {$u_1$};
\node[source] (r0) at ($(r1)-(\x,0)$) {$u_0$};
\node[source] (r2) at ($(r1)+(\x,0)$) {$u_2$};
\node[source] (r3) at ($(r2)+(\x,0)$) {$u_3$};
\node[source] (r4) at ($(r3)+(\x,0)$) {$u_4$};
\node[source] (r5) at ($(r4)+(\x,0)$) {$u_5$};
\node[source] (r6) at ($(r5)+(\x,0)$) {$u_6$};
\node[source] (r7) at ($(r6)+(\x,0)$) {$u_7$};
\node[source] (r8) at ($(r7)+(\x,0)$) {$u_8$};
\node[source] (r9) at ($(r8)+(\x,0)$) {$u_9$};
\node[source] (r10) at ($(r9)+(\x,0)$) {$u_{10}$};
\node[joint] (v1) at (0,-\y) {1};
\node[joint] (v2) at ($(v1)+(\xx,0)$) {2};
\node[joint] (v3) at ($(v2)+(\xx,0)$) {3};
\node[joint] (v4) at ($(v3)+(\xx,0)$) {4};
\node[joint] (v5) at ($(v4)+(\xx,0)$) {5};
\node[joint] (v6) at ($(v5)+(\xx,0)$) {6};
\path[->] (r0) edge (v1);
\path[->] (r1) edge (v1);
\path[->] (r1) edge (v2);
\path[->] (r2) edge (v1);
\path[->] (r2) edge (v2);
\path[->] (r3) edge (v1);
\path[->] (r3) edge (v3);
\path[->] (r4) edge (v2);
\path[->] (r4) edge (v3);
\path[->] (r5) edge (v3);
\path[->] (r5) edge (v5);
\path[->] (r6) edge (v4);
\path[->] (r6) edge (v5);
\path[->] (r7) edge (v5);
\path[->] (r7) edge (v6);
\path[->] (r8) edge (v5);
\path[->] (r8) edge (v6);
\path[->] (r9) edge (v6);
\path[->] (r10) edge (v6);
\end{tikzpicture}
}
\\\vspace*{12pt}
\resizebox{.67\textwidth}{!}{%
\begin{tikzpicture}[
scale=0.45
]
\def\x{2}
\def\xx{3}
\def\y{3}
\node[source] (r1) at (0,0) {$u_1$};
\node[source] (r2) at ($(r1)+(\xx,0)$) {$u_3$};
\node[source] (r4) at ($(r2)+(\xx,0)$) {$u_4$};
\node[joint] (v1) at (0,-\y) {1};
\node[joint] (v2) at ($(v1)+(\xx,0)$) {2};
\node[joint] (v3) at ($(v2)+(\xx,0)$) {3};
\path[->] (r1) edge (v1);
\path[->] (r1) edge (v2);
\path[->] (r2) edge (v1);
\path[->] (r2) edge (v3);
\path[->] (r4) edge (v2);
\path[->] (r4) edge (v3);
\end{tikzpicture}
\hspace*{50pt}
\begin{tikzpicture}[
scale=0.4,
auto,
node distance=2.6cm
]
\def\x{2}
\def\xx{3}
\def\y{2}
\node[joint] (v1) at (0,0) {1};
\node[joint] (v2) at ($(v1)+(\xx,\y)$) {2};
\node[joint] (v3) at ($(v1)+(\xx,-\y)$) {3};
\path[-] (v1) edge node {$u_1$} (v2);
\path[-] (v1) edge[below left] node {$u_3$} (v3);
\path[-] (v2) edge node {$u_4$} (v3);
\end{tikzpicture}
}
\caption{Above: example of a skeleton component $\sk_i$. Below: the core $\skic$ obtained from $\sk_i$ after peeling (left), and its encoding as $C_i$ (right).}
\label{fig:bigpatt}
\end{figure}

\medskip

\textbf{Step 3: Decomposing the core.}
Suppose the peeling phase stops at $j=j^*$.
Let $\Pic=P_i^{(j^*)}$ and $\skic=(\rootsc_i \cup \mkeyc_i, \edgc_{i})=\sk_i^{(j^*)}$.
We say $\Pic$ is the \emph{core} of $P_i$; this is the part that determines the dag treewidth.
Now, since $\skic$ violates all three conditions of the peeling step, we have $d_u^{\bul}=2$ for every source $u$ and $d_v^{\bul} \ge 2$ for every joint $v$.
Thus $\skic$ can be encoded as a simple graph.
Formally, let $\core=(V_{\core}, E_{\core})$ where $V_{\core}=\mkeyc_i$ and $E_{\core} = \{e_u : u \in \rootsc_i\}$, where $e_u=\mkeyc_i(u)$ for each $u \in \rootsc_i$.
To ease the discussion, for the edges we use $u$ and $e_u$ interchangeably.
Figure~\ref{fig:bigpatt} gives an example.
Note that $\skic$ is the skeleton of $\Pic$, since $\rootsc_i$ are the sources of $\Pic$ and the degree bound above implies that each $v \in \mkey_c$ is reachable from at least two sources of $\Pic$.
In what follows we let $k_i = |\rootsc_i \cup \mkeyc_i| = |E_{\core}|+|V_{\core}|$.

We use $\core$ to compute a good d.t.d.\ $\Tc_i=(\vtc_i,\etc_i)$ of $\Pic$ via tree decompositions.
First, we show that four our purposes it is sufficient to find a d.t.d.\ of width at most $\frac{|E_{\core}|}{5} + 3$.
\begin{lemma}
\label{lem:core1}
If $\sw(\Pic) \le \frac{|E_{\core}|}{5} + 3$ then $\sw(\Pic) \le \frac{k_i}{4}+2$.
\end{lemma}
\begin{proof}
First, suppose that $|V_{\core}| \le 4$.
Since all nodes of $\core$ have degree at least $2$, then $\core$ contains a $4$-cycle, and thus an edge cover $B^{\text{cov}}$ of size $2$.
We then build $\Tc_i$ by setting $B^{\text{cov}}$ as root, and $B_u = \{u\}$ for every $u \in E_{\core} \setminus B^{\text{cov}}$ as child of $B^{\text{cov}}$.
This is clearly a d.t.d.\ for $\Pic$ of width $2 < \frac{k_i}{4}+2$, and thus $\sw(\Pic) < \frac{k_i}{4}+2$.

Suppose instead that $|V_{\core}| \ge 5$.
Note that $|E_{\core}| \ge |V_{\core}|$ by construction of $\core$.
One can check that these conditions imply $\frac{|E_{\core}|}{20}+\frac{|V_{\core}|}{4} > 1$, which in turn gives:
\begin{align}
\sw(\Pic) \le \frac{|E_{\core}|}{5} + 3 \le \frac{|E_{\core}|+|V_{\core}|}{4}+2 = \frac{k_i}{4}+2
\end{align}
concluding the proof.\qed
\end{proof}
Therefore, we compute a d.t.d.\ of width at most $\frac{|E_{\core}|}{5} + 3$.
We do this in two steps.
\begin{lemma}
\label{lem:core2}
In time $O(1.7549^{k_i})$ one can compute a tree decomposition $D=(V_D,E_D)$ for $\core$ with treewidth at most $\frac{|E_{\core}|}{5} + 2$ and $O(k_i)$ bags.
\end{lemma}
\begin{proof}
By Theorem 2 of~\cite{Kneis&2005}, the treewidth of a graph $G = (V, E)$ is at most $\frac{|E|}{5} + 2$.
By Theorems~5.23-5.24 of~\cite{Fomin&2010}, we can compute a minimum-width tree decomposition of an $n$-node graph in time $O(1.7549^{n})$.
By Lemma~5.16 of~\cite{Fomin&2010}, in time $O(n)$ we can transform such a decomposition into one that contains at most $4n$ bags, leaving its width unchanged.
Therefore, in time $O(1.7549^{k_i})$ we can build a tree decomposition $D$ for $\core$ with $O(k_i)$ bags that satisfies $\tw(D) \le \frac{|E_{\core}|}{5} + 2$.
\qed
\end{proof}
\begin{lemma}
\label{lem:core}
Let $D=(V_D,E_D)$ be a tree decomposition of $\core$.
In time $\poly(k)$ we can build a d.t.d.\ $\Tc_i=(\vtc_i,\etc_i)$ for $\Pic$ such that $\sw(\Tc_i) \le \tw(D)+1$ and $|\vtc_i|=O(|V_D|+k_i)$.
\end{lemma}
\begin{proof}
To simplify the notation let us write $T,\vt,\et$ in place of $\Tc_i,\vtc_i,\etc_i$.
We first show how to build the tree $T$.
The tedious part is proving it is a valid d.t.d.\ for $\Pic$.

The intuition is that $D$ covers the edges of $\core$, which correspond to the sources of $\Pic$.
This gives a way to ``convert'' the bags of $D$ into bags for $\Tc_i$.
For every $v \in V_{\core}$ choose an arbitrary incident edge $u_v=\{v,z\}\in E_{\core}$.
Replace each bag $Y \in D$ by $B(Y) \!=\! \{u_v \!:\! v \!\in\! Y\}$, and for every $u \in \rootsc_i \setminus \cup_{Y \in D} B(Y)$, choose a bag $B(Y) : J(u) \subseteq Y$, and set the bag $B_u = \{u\}$ as child of $B(Y)$.
Let $T$ be the resulting tree.
To see that the construction is well-defined, note that, by point (2) of Definition~\ref{def:treedecomp}, for any $u \in E_{\core}$ there exists some $Y \in D$ such that $u=\{x,y\} \subseteq Y$.
Therefore assigning $B_u$ as child of some $B(Y)$ with $u \subseteq Y$ is licit.
Now, $\sw(T) \le \tw(D)+1$ follows immediately by the facts that $|B(Y)| \le |Y|$ for all $Y \in D$ and that $|B_u|=1$ for each of the bags $B_u$ above, and by Definition~\ref{def:treewidth} and~\ref{def:piecedecomp}.
The bound $|\vtc_i|=O(|V_D|+k_i)$ holds since $T$ contains a bag for each bag of $D$, plus at most one bag for each node in $\rootsc_i$, and $|\rootsc_i| \le k_i$.

Let us then check that $T$ is a d.t.d.\ for $\Pic$ via Definition~\ref{def:piecedecomp}.
Clearly, $T$ is a tree and satisfies property (1).
For property (2), let $E_{\core}(D) = \cup_{Y \in D} B(Y)$.
Observe that by construction $\cup_{B \in T}B = E_{\core}(D) \cup (\cup_{u \in E_{\core} \setminus E_{\core}(D)} B_u)$.
The right-hand expression is $E_{\core}$.

It remains to check property (3).
First, if we have set $B_u$ as child of $B_Y$ then by construction $J(B_u) \subseteq J(B_Y)$.
Thus we can ignore any such $B_u$ and focus on the remaining bags of $T$, proving that every $B,B',B''$ such that $B \in T(B',B'')$ satisfy $J(B') \cap J(B'') \subseteq J(B)$.
Let $Y,Y',Y''$ the three bags of $D$ from which the construction produced respectively $B,B',B''$.
Observe that $B \in T(B',B'')$ implies $Y \in D(Y',Y'')$.
Now suppose that, by contradiction, there exists $v \in J(B')\cap J(B'')$ such that $v \notin J(B)$.
Note that, by construction, we must have put some $u'$ with $e_{u'} = \{v,z'\}$ in $B'$ and some $u''$ with $e_{u''} = \{v,z''\}$ in $B''$, for some $z',z'' \in V_i$.
Moreover, $Y' \cap \{v,z'\} \ne \emptyset$ and $Y'' \cap \{v,z''\} \ne \emptyset$, else we could not have $u' \in B'$ and $u'' \in B''$.
Finally, bear in mind that $v \notin Y$ and $u',u'' \notin B$, otherwise $v \in J(B)$, contradicting the hypothesis.
Now we consider three cases.
We use repeatedly properties (2) and (3) of Definition~\ref{def:treedecomp}.

\textbf{Case 1.} $v \in Y' \cap Y''$. Then $v \in Y$, a contradiction.

\textbf{Case 2.} $v \in Y'$ and $v \notin Y''$.
Then $z'' \in Y''$ and $u''$ with $e_{u''} = \{v,z''\}$ is the edge chosen to cover $z''$, else we would not put $u'' \in B_{Y''}$.
Moreover there must be $\hat{Y} \in D$ such that $e_{u''} = \{v,z''\} \subseteq \hat{Y}$.
For the sake of the proof root $D$ at $Y$, so $Y'$ and $Y''$ are in distinct subtrees.
If $\hat{Y}$ and $Y''$ are in the same subtree then $Y \in D(Y', \hat{Y})$, but $v \in Y' \cap \hat{Y}$ and thus $v \in Y$, a contradiction.
Otherwise $Y \in D(Y'', \hat{Y})$, and since $z'' \in Y'' \cap \hat{Y}$ then $z'' \in Y$ and then $u'' \in B(Y)$, a contradiction.

\textbf{Case 3.} $v \notin Y'$ and $v \notin Y''$.
Then $z' \in Y'$, $z'' \in Y''$, and $u',u''$ are the sources chosen to cover respectively $z',z''$.
Moreover there must be $\hat{Y}, \hat{Y}' \in D$ such that $e_{u'}=\{v,z'\} \subseteq \hat{Y}$ and $e_{u''}=\{v,z''\} \subseteq \hat{Y}'$.
Root again $D$ at $Y$.
If $Y \in D(\hat{Y}, \hat{Y}')$ then since $v \in \hat{Y} \cap \hat{Y}'$ it holds $v \in Y$, a contradiction.
Otherwise $\hat{Y}, \hat{Y}'$ are in the same subtree of $D$.
If the subtree is the same as $Y''$, then $Y \in D(Y', \hat{Y})$, but $z \in Y' \cap \hat{Y}$ and thus $z' \in Y$ and thus $u' \in B(Y)$, a contradiction.
Otherwise we have $Y \in D(Y'', \hat{Y})$; but $z'' \in Y'' \cap \hat{Y}$, thus $z'' \in Y$ and $u'' \in B(Y)$, again a contradiction.
\qed
\end{proof}
Combining Lemma~\ref{lem:core1}, Lemma~\ref{lem:core2}, and Lemma~\ref{lem:core}, we obtain:
\begin{lemma}
\label{lem:core3}
In time $O(1.7549^{k_i} \poly(k_i))$ we can compute a d.t.d.\ $\Tc_i=(\vtc_i,\etc_i)$ for $\Pic$ such that $\sw(\Tc_i) \le \lfloor\frac{k_i}{4}\rfloor+2$ and $|\vtc_i|=O(k_i)$. 
\end{lemma}
With Lemma~\ref{lem:core3}, we are almost done.
It remains to wrap all our bounds together.

\medskip
\textbf{Step 4: Assembling the tree.}
Recall the sub-patterns $P_i$ obtained after the greedy bag construction (step 1).
Let $T_i=(\vt_i,\et_i)$ be the d.t.d.\ for $P_i$ as returned by the recursive peeling and the core decomposition.
Since the peeling phase only adds bags of size $1$, then $\sw(T_i) = \sw(\Tc_i)$.
Therefore, by Lemma~\ref{lem:core2}, $\sw(T_i) \le \lfloor\frac{k_i}{4}\rfloor+2$.
Moreover, since each bag added in the peeling phase corresponds to a unique source, then $|\vt_i| = O(k_i+|V(P_i)|) = O(|V(P_i)|)$.

Let now $T=(\vt,\et)$ be the d.t.d.\ for $P$ obtained by assembling the trees $T_1,\ldots,T_{\ell}$ as in Lemma~\ref{lem:compose}.
By Lemma~\ref{lem:compose} itself, $\sw(T) \le |B^{*}| + \max_{i=1,\ldots,\ell}\sw(T_i)$, thus:
\begin{align}
\sw(T) \le |B^{*}| + \max_{i=1,\ldots,\ell} \Big\lfloor\frac{k_i}{4}\Big\rfloor + 2
\end{align}
Now, by Lemma~\ref{lem:bstar} we know that $P(B^*)$ has at least $4|B^{*}|$ nodes and $4|B^{*}|$ arcs.
Similarly, since each $\sk_i$ has at least $k_i$ nodes and $k_i$ arcs, then $P \setminus P(B^{*})$ has at least $\sum_{i=1}^{\ell}k_i$ nodes and $\sum_{i=1}^{\ell}k_i$ arcs.
Then $\sw(T) \le \lfloor\frac{k}{4}\rfloor + 2$ and $\sw(T) \le \lfloor\frac{e}{4}\rfloor + 2$, so $\sw(T) \le \min(\lfloor\frac{k}{4}\rfloor, \lfloor\frac{e}{4}\rfloor) + 2$.
Moreover, $|\vt|=1 + \sum_{i=1}^{\ell}|\vt_i| = O(\sum_{i=1}^{\ell} |V(P_i)|) = O(k)$.
Finally, by Lemma~\ref{lem:core2} the time to build $T_i$ is $O(1.7549^{k_i} \poly(k_i))$, since the peeling phase clearly takes time $\poly(k_i)$.
The total time to build $T$ is therefore $O(1.7549^{k} \poly(k))$.
This concludes the proof of Theorem~\ref{thm:widthbound}.

\subsection{Bounds for quasi-cliques (Theorem~\ref{thm:dense_ub})}
\label{sub:dense_bound}
\begin{lemma}
\label{lem:quasicliques}
If a $k$-node dag $P$ has ${k \choose 2} - \epsilon$ edges, then in time $O(\poly(k))$ one can compute a d.t.d.\ $T$ for $P$ on two bags such that $\sw(T) \le \lceil \nicefrac{1}{2} + \sqrt{\nicefrac{\epsilon}{2}} \, \rceil$.
\end{lemma}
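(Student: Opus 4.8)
The plan is to bound $\sw(H)$ by exhibiting a single dag tree decomposition whose width is small, after first controlling the number of sources that any acyclic orientation of $H$ can have. Recall $\sw(H) = \sw_3(H)$ involves a triple maximum over supergraphs, quotients, and acyclic orientations; but since a supergraph of a clique-minus-$\epsilon$-edges is again a clique-minus-$\epsilon'$-edges with $\epsilon' \le \epsilon$, and quotients only remove vertices (hence can only decrease the edge deficiency relative to the clique on the surviving vertices), it suffices to bound $\sw_1(H')$ for all such $H'$, i.e.\ to bound $\sw(P)$ for every acyclic orientation $P$ of a graph with at least $\binom{k'}{2}-\epsilon$ edges on $k' \le k$ vertices. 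So the real work is: given a dag $P$ on $k'$ nodes whose underlying graph misses only $\epsilon$ edges from the clique, build a d.t.d.\ of $P$ of width at most $\lceil \tfrac12 + \sqrt{\epsilon/2}\,\rceil$.

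First I would bound the number of sources $s = |\roots_P|$. If $u,u'$ are both sources, then the arc between them (if it exists in $H$) cannot be oriented either way without violating one of them being a source; hence $u,u'$ are non-adjacent in $H$. So $\roots_P$ is an independent set in $H$, and since $H$ misses only $\epsilon$ edges, an independent set has size $s$ only if $\binom{s}{2} \le \epsilon$, i.e.\ $s(s-1)/2 \le \epsilon$, which gives $s \le \lceil \tfrac12 + \sqrt{\tfrac{1}{4}+2\epsilon}\,\rceil \le \lceil \tfrac12 + \sqrt{\epsilon/2}\,\rceil + O(1)$ — I'd need to check the arithmetic carefully so the constant comes out to exactly $\lceil \tfrac12 + \sqrt{\epsilon/2}\,\rceil$ rather than a slightly worse bound; writing $t = \lceil \tfrac12 + \sqrt{\epsilon/2}\,\rceil$ one checks $\binom{t+1}{2} > \epsilon$, so any independent set, in particular $\roots_P$, has at most $t$ vertices. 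The trivial d.t.d.\ with a single bag $B = \roots_P$ then has width $s \le t$, and since this holds for every $P$ (and every quotient and supergraph), we get $\sw(H) \le t$.

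The step I expect to require the most care is precisely the conversion from ``$\roots_P$ is a small independent set'' to the stated clean closed form, and making sure it holds uniformly over the quotients $H/\theta$ and supergraphs $H'$ that appear in the definition of $\sw_3$. For quotients one must note that identifying vertices can only merge or delete, never create, an independence obstruction, so the edge-deficiency bound $\binom{s}{2} \le \epsilon$ on independent sets is inherited; for supergraphs it is immediate since adding edges only shrinks independent sets. A subtle point worth stating explicitly is that $\roots_P$ being an independent set of $H$ is exactly the constraint, and that the single-bag decomposition $T$ consisting of one node $B=\roots_P$ trivially satisfies properties~1--3 of Definition~\ref{def:piecedecomp} (property~3 is vacuous), so $\sw(P) \le |\roots_P|$. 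Combining, $\sw(H) = \sw_3(H) \le \max_P |\roots_P| \le \lceil \tfrac12 + \sqrt{\epsilon/2}\,\rceil$, as claimed.
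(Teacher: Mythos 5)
There is a genuine gap: the single-bag decomposition $B=\roots_P$ only yields width $|\roots_P|$, and the bound you need on $|\roots_P|$ does not hold. From $\binom{s}{2}\le\epsilon$ (with $s=|\roots_P|$) one gets $s\le\frac12+\sqrt{\frac14+2\epsilon}\le 1+\sqrt{2\epsilon}$, whose leading term is $\sqrt{2\epsilon}=2\sqrt{\epsilon/2}$ --- twice the claimed width. Your assertion that $t=\lceil\frac12+\sqrt{\epsilon/2}\rceil$ satisfies $\binom{t+1}{2}>\epsilon$ is false: take $\epsilon=8$, so $t=3$ but $\binom{4}{2}=6\le 8$, and indeed an orientation with $4$ sources is possible while the lemma promises width $3$. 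So the ``arithmetic to be checked carefully'' cannot be made to work for a one-bag decomposition.

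The missing idea is to split the sources into \emph{two} bags: take $B_1\cup B_2=\roots_P$ with $|B_1|=\lfloor s/2\rfloor$ and $|B_2|=\lceil s/2\rceil$, joined by a single tree edge. This is a valid d.t.d.\ (properties 1 and 2 are immediate, and property 3 is vacuous on a two-node tree since every path $T(B',B'')$ has no interior node other than its endpoints, for which the containment is trivial). The width is then $\lceil s/2\rceil\le\lceil\frac{1+\sqrt{2\epsilon}}{2}\rceil=\lceil\frac12+\sqrt{\epsilon/2}\rceil$, which is exactly the claimed bound. Your handling of the outer maximum over supergraphs and quotients (edge deficiency relative to the clique on the surviving vertex set can only decrease, so the independence bound on the source set is inherited) is fine and matches the paper; only the decomposition itself needs the two-bag refinement.
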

\begin{proof}
The source set $\roots$ of $P$ is an independent set.
Hence $\epsilon \ge {|\roots| \choose 2}$, and $|\roots| \le 1 + \sqrt{2\epsilon}$.
Consider any tree $T$ on two bags $B_1,B_2$ such that $B_1 \cup B_2 = \roots$, $|B_1|=\lfloor |\roots|/2 \rfloor$, and $|B_2|=\lceil |\roots|/2 \rceil$.
It is immediate to check that $T$ satisfies the claim.
\qed
\end{proof}
By coupling Lemma~\ref{lem:quasicliques} and Theorem~\ref{thm:wrapping}, for computing $\homo{H,G}$ and $\sub{H,G}$ we obtain a running time bound of $2^{O(k \log k)} \cdot O(d^{k-\lceil \frac{1}{2} + \sqrt{\frac{\epsilon}{2}} \, \rceil} n^{\lceil \frac{1}{2} + \sqrt{\frac{\epsilon}{2}} \, \rceil} \log n)$.
For  $\ind{H,G}$, we refine the bound of Theorem~\ref{thm:wrapping} by observing that $|D(H)| \le 2^{\epsilon}$.
This yields a running time bound of $2^{O(\epsilon + k \log k)} \cdot O(d^{k-\lceil \frac{1}{2} + \sqrt{\frac{\epsilon}{2}} \, \rceil} n^{\lceil \frac{1}{2} + \sqrt{\frac{\epsilon}{2}} \, \rceil} \log n)$.
This concludes the proof of Theorem~\ref{thm:dense_ub}.

\subsection{Bounds for quasi-multipartite graphs (Theorem~\ref{thm:bipartite})}
\begin{lemma}
\label{lem:hbipartite}
If $H$ is a complete multipartite graph, then $\sw_2(H) = 1$.
If $H$ is a complete multipartite graph plus $\epsilon$ edges, then $\sw_2(H) \le \lfloor\frac{\epsilon}{4}\rfloor+2$.
In any case, for any $\theta \in \Theta(H)$, for any acyclic orientation $P$ of $H/\theta$ we can compute in time $2^{O(k)}$ a d.t.d.\ of $P$ on $O(k)$ bags whose width satisfies the bounds above.
\end{lemma}
\begin{proof}
First, suppose $H=(V_H,E_H)$ is complete multipartite, so $V_H = V_H^1 \cup \ldots \cup V_H^{\kappa}$ where each $V_H^j$ is a maximal independent set in $H$.
In any acyclic orientation $P$ of $H$, the source set $S$ satisfies $S \subseteq V_H^j$ for some $j \in \{1,\ldots, \kappa\}$.
Moreover, $V_P(u) = V_P(u')$ for any $u,u' \in S$.
A d.t.d.\ $T$ for $P$ of width $\sw(T) = 1$ is the tree on $|S|$ bags with one source per bag, which can be computed in time $O(\poly(k))$.

Suppose now we add $\epsilon$ arcs to $P$, with any orientation; this means $H$ is a complete multipartite graph plus $\epsilon$ edges.
Again we have $S \subseteq V_H^j$, but now for some $u,u' \in S$ we might have $V_P(u) \ne V_P(u')$, so the d.t.d\ above might not be valid anymore.
Let $P_j = P[V_H^j]$ and consider any d.t.d.\ $T$ for $P_j$.
We argue that $T$ is a valid d.t.d.\ for $P$ as well.
First, the source set of $P_j$ is same of $P$ (that is, $S$).
Thus, since $T$ satisfies properties (1) and (2) of Definition~\ref{def:piecedecomp} for $P_j$, then it does so for $P$, too.
For property (3), note that every node $v \in V_H \setminus V_H^j$ is reachable from every $u \in S$.
Thus, all bags $B$ of $T$ satisfy $V_P(B) = (V_H \setminus V_H^j) \cup (V_P(B) \cap V_H^j)$.
As a consequence, for any three bags $B,B_1,B_2$, if $V_{P_j}(B_1) \cap V_{P_j}(B_2) \subseteq V_{P_j}(B)$ then $V_{P}(B_1) \cap V_{P}(B_2) \subseteq V_{P}(B)$.
Thus $T$ satisfies property (3) and is a d.t.d.\ for $P$.
Therefore, any d.t.d.\ for $P_j$ is a d.t.d.\ for $P$.
Now, since $P_j$ has at most $\epsilon$ edges, by Theorem~\ref{thm:widthbound} in time $2^{O(k)}$ we can compute a d.t.d.\ for it of width at most $\lfloor\frac{\epsilon}{4}\rfloor+2$ on $O(k)$ bags.

Consider now any $\theta \in \Theta(H)$.
For any $v \in V_H$, we denote by $\theta(v)$ the node of $H/\theta$ corresponding to $v$, and for any $x \in H/\theta$ we let $\theta^{-1}(x) = \{u \in V_H : \theta(u)=\theta(x)\}$ be the set of nodes of $H$ identified in $x$.
Let $P$ be any acyclic orientation of $H/\theta$.
Since the sources $S$ of $P$ form an independent set, then $\cup_{x \in S} \theta^{-1}(x) \subseteq V_H^j$ for some $j$.
Moreover, for any node $x$ of $P$, if $v \in \theta^{-1}(x)$ for some $v \in V_H^i$ with $i \ne j$ then $x$ reachable from every node in $S$.
Therefore, if we let $V_P^j = \cup_{v \in V_H^j} \theta(v)$ and $P_j=P[V_P^j]$, the arguments above apply and we obtain the same bound.
\qed
\end{proof}
By coupling Lemma~\ref{lem:hbipartite} and Theorem~\ref{thm:wrapping},
when $H$ is complete multipartite, for computing $\homo{H,G}$ and $\sub{H,G}$ we obtain a time bound of $2^{O(k \log k)} \cdot O(d^{k-1} n \log n)$.
Similarly, when $H$ is complete multipartite plus $\epsilon$ edges, we obtain a time bound of $2^{O(k \log k)} \cdot O(d^{k-\lfloor\frac{\epsilon}{4}\rfloor-2}n^{\lfloor\frac{\epsilon}{4}\rfloor-2}\log n)$.
This proves Theorem~\ref{thm:bipartite}.

\subsection{Independence number and dag treewidth}
\label{sub:alpha}
Recall that $\alpha(H)$ is the independence number of $H$. We show:
\begin{lemma}
\label{thm:pw_tw}
Any $k$-node graph $H$ satisfies $\Omega(\alpha(H)) \le \hsw(H) \le \alpha(H)$.
\end{lemma}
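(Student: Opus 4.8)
The plan is to establish the two inequalities separately, as they rely on rather different ideas. For the upper bound $\hsw(H) \le \alpha(H)$, I would argue that every dag tree decomposition can be taken to have bags that are independent sets in the underlying graph, so the width is at most the size of a maximum independent set. Concretely, fix $H' \in D(H)$ and $\theta \in \Theta$, and let $P$ be any acyclic orientation of $H'/\theta$. The source set $\roots_P$ is an independent set of $P$, hence of $H'/\theta$; and since identifying nodes and adding edges only shrinks the independence number, $|\roots_P| \le \alpha(H'/\theta) \le \alpha(H)$. The single-bag tree $T = (\{\roots_P\}, \emptyset)$ is trivially a d.t.d.\ of $P$ (properties 1--3 of Definition~\ref{def:piecedecomp} hold vacuously), so $\sw(P) \le |\roots_P| \le \alpha(H)$. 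Taking maxima over $P$, $\theta$, $H'$ gives $\hsw(H) = \sw_3(H) \le \alpha(H)$. This direction is routine.

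For the lower bound $\hsw(H) = \Omega(\alpha(H))$, the idea is to exhibit a single orientation $P$ of $H$ (so that already $\sw_1(H) = \Omega(\alpha(H))$, which suffices since $\sw_1 \le \sw_2 \le \sw_3 = \hsw$) whose sources form a large independent set that is ``spread out'' enough to force large bags. Let $I \subseteq V_H$ be a maximum independent set, $|I| = \alpha(H)$. I would orient $H$ so that every vertex of $I$ is a source — e.g.\ take any linear order on $V_H$ placing all of $I$ first, and orient each edge from the earlier to the later endpoint; then no vertex of $I$ has an in-neighbour, so $I \subseteq \roots_P$. Now I want to show $\sw(P) \ge c\,\alpha(H)$ for some constant $c > 0$. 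The natural approach: in any d.t.d.\ $T$ of $P$, consider the bags as covering $\roots_P \supseteq I$; property 3 says that for $B$ on the tree-path between $B_1$ and $B_2$, $V_P(B_1) \cap V_P(B_2) \subseteq V_P(B)$. The crux is to argue that two sources $u,u' \in I$ that appear in ``far apart'' bags must have $V_P(u) \cap V_P(u')$ small or controllable, and conversely that if the decomposition had small width then $I$ would be covered by few bags, forcing many sources into one bag — which is fine — so the real work is showing the intersection structure cannot be trivialised.

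The main obstacle, and where I expect the argument to need care, is precisely this lower-bound direction: a priori a clever orientation might make all sources reach a common sink, trivialising intersections and allowing a narrow decomposition even with many sources (this is exactly what happens for cliques and complete multipartite graphs, where $\sw = 1$ but $\alpha$ can be large — so the ``$\Omega$'' hides a constant strictly less than $1$, and the bound is genuinely about an \emph{unavoidable} constant fraction). The fix I would pursue is to choose $P$ more cleverly: rather than an arbitrary orientation realising $I$ as sources, pick one in which the transitive closures $V_P(u)$, $u \in I$, are pairwise ``nearly disjoint'' or at least pairwise incomparable — for instance by also reversing the orientation among $I$'s neighbourhood appropriately, or by working with a sub-pattern on $I \cup N(I)$ where the bipartite structure between $I$ and its neighbours is rich. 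Then a counting/packing argument (each bag $B$ has $|V_P(B)| \le k$, the $V_P(u)$ for $u$ in distinct bags must essentially tile or overlap-control within $V_P$, and property 3 propagates intersections along tree-paths) should force $\sw(T) = \max_B |B| \ge \alpha(H) / O(1)$. I would first work out the clean cases (paths, subdivided cliques) to calibrate the constant, then extract the general packing inequality; bounding how much property 3 can ``save'' along a path of the tree is the technical heart.
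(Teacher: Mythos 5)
Your upper bound is correct and is exactly the paper's argument: sources of any orientation form an independent set, and adding edges or identifying nodes cannot increase the independence number, so every bag set $\roots_P$ already gives a single-bag d.t.d.\ of width at most $\alpha(H)$.

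The lower bound, however, has a genuine gap, and it is located precisely at your opening reduction: you claim it suffices to show $\sw_1(H) = \Omega(\alpha(H))$ by exhibiting one orientation of $H$ itself with all of $I$ as sources. This is false. Take $H$ to be a star $K_{1,k-1}$ (or any complete multipartite graph, or the edgeless graph): then $\alpha(H) = k-1$, yet \emph{every} acyclic orientation $P$ of $H$ has $\sw(P) = 1$, because all sources have identical (or empty) transitive closures and the path of singleton bags satisfies property (3) of Definition~\ref{def:piecedecomp} trivially. So $\sw_1(H) = 1$ and no choice of orientation of the original $H$ can force large bags. You half-notice this obstacle in your parenthetical about complete multipartite graphs, but you misdiagnose it as a matter of the hidden constant in the $\Omega$; in fact no constant fraction survives, and the conclusion to draw is that the lower bound cannot live at the level of $\sw_1$ or $\sw_2$ at all. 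The definition $\hsw(H) = \sw_3(H)$ takes a maximum over supergraphs $H' \in D(H)$ exactly so that this lower bound can be proved, and the paper's proof uses that freedom: it \emph{adds edges} inside the independent set $I$ so that $H'[I]$ becomes the $1$-subdivision of a bounded-degree expander $\mathcal{E}$ on a constant fraction of $I$, with the subdivision vertices $I_2$ as sources of the chosen orientation $P$. The mechanism that forces large bags is then a transfer argument, which is the second idea missing from your sketch: replacing each bag $B$ of a d.t.d.\ of $P$ by the set of joints $J(B)$ yields a tree decomposition of $\mathcal{E}$, so $\sw(P) \ge \frac{1}{2}(\tw(\mathcal{E})+1) = \Omega(|I|) = \Omega(\alpha(H))$ since $\mathcal{E}$ has linear treewidth. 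Your proposed ``packing/counting'' argument on the transitive closures $V_P(u)$ is not developed enough to substitute for this, and as argued above it cannot succeed without first modifying $H$.
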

\begin{proof}
For the upper bound, note that $\alpha(H'/\theta) \le \alpha(H)$ for any $H' \in D(H)$ and any $\theta \in \Theta(H')$. Moreover, in any acyclic orientation $P$ of $H'/\theta$ the sources form an independent set. Thus $\sw(P) \le \alpha(H)$. The bound follows by Definition~\ref{def:widths}.

For the lower bound, we exhibit a pattern $H'$ obtained by adding edges to $H$ such that $\sw(P) = \Omega(\alpha(H))$ for all its acyclic orientations $P$.
Let $I \subseteq V_H$ be an independent set of $H$ with $|I| = \Omega(\alpha(H))$ and $|I|\!\!\mod 5 \equiv 0$.
We add edges to $I$, so to obtain the $1$-subdivision of an expander.
Partition $I$ into $I_{\mkey},I_{\roots}$ where $|I_{\mkey}| = \frac{2}{5}|I|$ and $|I_{\roots}| = \frac{3}{5}|I|$.
Consider a $3$-regular expander $\mathcal{E}=(I_{\mkey}, E_{\mathcal{E}})$ of linear treewidth $t(\mathcal{E}) = \Omega(|I_{\mkey}|)$.
It is well known that such expanders exist (see e.g.\ Proposition~1 and Theorem~5 of~\cite{Grohe&2009}).
Note that $|E_{\mathcal{E}}|=\frac{3}{2}|I_{\mkey}|=|I_{\roots}|$.
For each edge $\{u,v\} \in E_{\mathcal{E}}$, we choose a distinct node in $I_{\roots}$, denoted by $e_{uv}$, and we add to $H$ the edges $\{e_{uv},u\}$ and $\{e_{uv},v\}$.
Let $H'$ be the resulting pattern.
Observe that $H'[I]$ is the $1$-subdivision of $\mathcal{E}$, and that $t(\mathcal{E}) = \Omega(\alpha(H))$ since $|I_{\mkey}|=\Omega(\alpha(H))$.

Let now $P=(V_P,A_P)$ be any acyclic orientation of $H'$ where $I_{\roots} \subseteq S_P$ where $S_P$ are the sources of $P$.
Such an orientation exists since $I_{\roots}$ is an independent set in $H'$.
Let $T$ be any d.t.d.\ of $P$.
We show that $\sw(T) \ge \frac{1}{2}(\tw(\mathcal{E})+1) = \Omega(\alpha(H))$, which implies $\sw(P) = \Omega(\alpha(H))$ and therefore the thesis $\hsw(H) = \Omega(\alpha(H))$.
To this end, consider the tree $D$ obtained from $T$ by replacing each bag of sources $B$ with the bag of nodes $J(B) \cap I_{\mkey}$.
We claim that $D$ is a tree decomposition of $\mathcal{E}$ of width at most $2\sw(T)-1$.
Let us start by checking the properties of Definition~\ref{def:treedecomp}.
\\
\textbf{Property (1)}. By point (2) of Definition~\ref{def:piecedecomp}, the d.t.d.\ $T$ satisfies $\cup_{B \in T} B = \roots_P$.
Therefore, by construction of $D$, we have $\cup_{X \in D}\, X = \cup_{B \in T} (J(B) \cap I_{\mkey}) = I_{\mkey}$.
\\
\textbf{Property (2)}. Let $\{v,w\}$ be any edge of $\mathcal{E}$ where we recall that $\{v,w\} \subseteq I_{\mkey}$. By construction of $H'$, there exists $u \in I_{\roots}$ such that $J(u) = \{v,w\}$ in $P$. Since $T$ is a d.t.d., it satisfies point (2) of Definition~\ref{def:piecedecomp}, hence $u \in B$ for some $B \in T$. By construction of $D$ this implies there is some bag $X \in D$ such that $X=J(u)=\{v,w\}$.
\\
\textbf{Property (3)}.
Fix any three bags $X_1,X_2,X_3 \in D$ such that $X_1 \in D(X_2,X_3)$.
By construction, $X_1=J(B_1)\cap I_{\mkey}, X_2=J(B_2)\cap I_{\mkey}, X_3=J(B_3)\cap I_{\mkey}$ for some $B_1,B_2,B_3 \in T$ such that $B_1 \in T(B_2,B_3)$.
Consider any $v \in X_2 \cap X_3$; we need to show that $v \in X_1$.
By construction of $D$, we have $v \in X_2 \cap X_3 = J(B_2) \cap J(B_3) \cap I_{\mkey}$.
Thus, there exist $u \in B_2$ and $u' \in B_3$ such that $v \in J(u) \cap I_{\mkey}$ and $v \in J(u') \cap I_{\mkey}$.
However, since $B_1 \in T(B_2,B_3)$, point (3) of Definition~\ref{def:treedecomp} implies $J(u) \cap J(u') \subseteq J(B_1)$.
Therefore, $v \in J(B_1)$ as well.
Moreover $v \in I_{\mkey}$, and thus $v \in J(B_1) \cap I_{\mkey}$.
But $J(B_1) \cap I_{\mkey} = X_1$, so $v \in X_1$.

Hence, $D$ is a tree decomposition of $\mathcal{E}$.
Finally, note that any bag $X \in D$ by construction satisfies $|X| = |J(B) \cap I_{\mkey}| \le 2|B|$ since any source $u \in B$ has at most $2$ arcs towards $I_{\mkey}$.
Then by Definition~\ref{def:treewidth} and Definition~\ref{def:pw} we have $\tw(\mathcal{E}) \le 2\sw(P)-1$, that is, $\sw(P) \ge \frac{1}{2}(\tw(\mathcal{E})+1)$, as claimed.
\qed
\end{proof}

\section{Lower bounds}
\label{sec:lb}
We prove Theorem~\ref{thm:hsw_lb}, in a more technical form.
Note that, since $\hsw(H) = \Theta(\alpha(H))$ by Lemma~\ref{thm:pw_tw}, the bound still holds if one replaces $\hsw(H)$ by $\alpha(H)$.
The proof uses the following result:
\begin{theorem}[\cite{Curticapean&2014}, Theorem I.2]
\label{thm:curticapean}
The following problems are $\#W[1]$-hard and, assuming ETH, cannot be solved in time $f(k) \cdot n^{o(k/\log{k})}$ for any computable function $f$: counting (directed) paths or cycles of length $k$, and counting edge-colorful or uncolored $k$-matchings in bipartite graphs.
\end{theorem}
Let us now state the lower bound.
\begin{theorem}
Choose any function $a : \mathbb{N} \to \mathbb{N}$ such that $a(k) \in [1,k]$ for all $k \in \mathbb{N}$. There exists an infinite family $\mathcal{H}$ of patterns such that (1) for all $H \in \mathcal{H}$ we have $\hsw(H) = \Theta(a(|V(H)|))$, and (2) if there exists an algorithm that for all $H \in \mathcal{H}$ computes $\ind{H,G}$ or $\sub{H,G}$ in time $f(d,k) \cdot n^{o(a(\tau(H))/\log {a(\tau(H))})}$, where $d$ is the degeneracy of $G$, then ETH fails.
\end{theorem}
\begin{proof}
We reduce counting cycles in an arbitrary graph to counting a gadget pattern on $k$ nodes and dag treewidth $O(a(k))$ in a $d$-degenerate graph.

First, fix a function $d : \mathbb{N} \to \mathbb{N}$ such that $d(k) \in \Omega(\frac{k}{a(k)})$.
Now consider a simple cycle on $k_0 \ge 3$ nodes and any arbitrarily large $k \ge 3$.
Our gadget pattern on $k$ nodes is the following.
For each edge $e=uv$ of the cycle create a clique $C_e$ on $d(k)-1$ nodes; delete $e$ and connect both $u$ and $v$ to every node of $C_e$.
The resulting pattern $H$ has $k =  k_0 \, d(k)$ nodes.
Let us prove that $\hsw(H) \le k_0$; since $k_0 = \frac{k}{d(k)} \in O(a(k))$, this implies $\hsw(H) = O(a(k))$ as desired.
Consider again the generic edge $e=uv$.
In any acyclic orientation $P$ of $H$, the set $C_{e} \cup u$ induces a clique, and thus can contain at most one source.
Applying the argument to all $e$ shows that $|\roots(P)| \le k_0$, hence $\hsw(P) \le k_0$.
This holds also if we add edges and/or identify nodes of $P$, hence $\hsw(H) \le k_0$.

Now consider a simple graph $G_0$ on $n_0$ nodes and $m_0$ edges.
We replace each edge of $G_0$ as described above, which takes time $O(\poly(n_0))$.
The resulting graph $G$ has $n=m_0(d-1) + n_0 = O(dn_0^2)$ nodes and degeneracy $d$.
Every $k_0$-cycle of $G_0$ is univocally associated to a copy of $H$ in $G$ (note that every non-induced copy of $H$ in $G$ is induced too).
Suppose we have an algorithm that computes $\ind{H,G}$ or $\sub{H,G}$ in time $f(d,k) \cdot n^{o(\hsw(H)/\log{\hsw(H)})}$.
Since $\hsw(H) \le k_0$, $n = O(dn_0^2)$, $k=f_1(d,k_0)$, and $d=f_2(k_0)$, for some functions $f_1,f_2$, then the running is time $f(k_0) \cdot (n_0)^{o(k_0/\log{k_0})}$.
Invoking Theorem~\ref{thm:curticapean} concludes the proof.
\qed
\end{proof}

\section{Conclusions}
We have shown how, by introducing a novel tree-like decomposition for directed acyclic graphs, one can improve on the decades-old state-of-the-art subgraph counting algorithms when the host graph is sufficiently sparse.
Our decomposition may be of independent interest, as it seems to capture the relevant structure of the problem.
We leave open the question of finding a characterization of the complexity of subgraph counting in sparse graphs that is tight for every given pattern, rather than pattern classes as a whole.

\section*{Acknowledgements}
I thank the anonymous reviewers for their comments.

\bibliographystyle{plain}
\bibliography{biblio.bib}

\end{document}